\def\mdseries@tt{m}             
\def\BibTeX{{\rm B\kern-.05em{\sc i\kern-.025em b}\kern-.08emT\kern-.1667em\lower.7ex\hbox{E}\kern-.125emX}}
\newtheorem{theorem}{Theorem}[section]
\newtheorem{corollary}{Corollary}[theorem]
\newtheorem{lemma}[theorem]{Lemma}
\newcommand{\para}[1]{\smallskip \noindent {\bf #1.}}
\newcommand{\capped}{capped\xspace}
\newcommand{\caplet}{caplet\xspace}
\newcommand{\caplets}{caplets\xspace}
\newcommand{\Caplets}{Caplets\xspace}
\newcommand{\tsD}{{\textsf{t-Star-Decomposition}\xspace}}
\newcommand{\cC}{\ensuremath{\mathcal{C}}}
\newcommand{\cI}{\ensuremath{\mathcal I}}
\newcommand{\cP}{\ensuremath{\mathcal P}}
\newcommand{\bZ}{\ensuremath{\mathbb Z}}
\newcommand{\RRN}{\mathbb{R}_{\ge 0}}
\DeclareMathOperator*{\argmin}{arg\,min}
\newcommand{\dist}{\ensuremath{\mathit{dist}}\xspace}
\newcommand{\mymath}[1]{\newline\centerline{$\displaystyle{#1}$}}
\begin{document}

\title{Clustering without Over-Representation}

\author{Sara Ahmadian}
\affiliation{%
 \institution{Google Research}
 \city{New York}
 \state{NY}
 \country{US}}
\email{sahmadian@google.com}

\author{Alessandro Epasto}
\affiliation{%
 \institution{Google Research}
 \city{New York}
 \state{NY}
 \country{US}}
\email{aepasto@google.com}

\author{Ravi Kumar}
\affiliation{%
 \institution{Google Research}
 \city{Mountain View}
 \state{CA}
 \country{US}} 
\email{ravi.k53@gmail.com}

\author{Mohammad Mahdian}
\affiliation{%
 \institution{Google Research}
 \city{New York}
 \state{NY}
 \country{US}}
\email{mahdian@google.com}

\renewcommand{\shortauthors}{S. Ahmadian, A. Epasto, R. Kumar, M. Mahdian}

\begin{abstract}
In this paper we consider clustering problems in which each point is endowed with a color.  The goal is to cluster the points to minimize the classical clustering cost but with the additional constraint that no color is over-represented in any cluster.   This problem is motivated by practical clustering settings, e.g., in clustering news articles where the color of an article is its source, it is preferable that no single news source dominates any cluster.  

For the most general version of this problem, we obtain an algorithm that has provable guarantees of performance; our algorithm is based on finding a fractional solution using a linear program and rounding the solution subsequently.  For the special case of the problem where no color has an absolute majority in any cluster, we obtain a simpler combinatorial algorithm also with provable guarantees.  Experiments on real-world data shows that our algorithms are effective in finding good clustering without over-representation.
\end{abstract}

\maketitle

\section{Introduction}
Clustering is a fundamental problem in data mining and unsupervised machine learning. Many variants of this problem have been studied in the literature. In a number of applications, clustering needs to be performed in the presence of additional constraints, such as those associated with fairness or diversity. Chierichetti et al.~\cite{chierichetti2017fair} study one such clustering problem, where the constraint is that the distribution of a particular feature (say, gender) in each cluster is identical to that of the general population. This is a highly constraining requirement, particularly in cases where the protected feature can take many values, and in many cases such a clustering does not exist. Furthermore, in many applications, such as the ones explained below, proportional representation is not really required: a clustering that ensures  no particular feature value is highly over-represented in any cluster suffices.

A motivating application for our work is the following: every day, online advertising systems sell billions of advertising opportunities, specified by keywords the advertisers provide, through auctions. This is a highly heterogeneous set of auctions, and to optimize any of the auction parameters, one needs to cluster this set into smaller, more homogeneous, clusters. However, to ensure that no advertiser can manipulate this process, it is crucial that no advertiser has a large market share in any cluster (see~\cite{WWW18} for a theoretical justification of this statement). Hence, keywords must be clustered such that no advertiser is over-represented in any cluster. 

In addition to the above, there are other settings where an upper bound on the representation of each group in each cluster can capture real-world requirements. For example, in clustering news articles, requiring that no cluster is dominated by a certain view point or a certain news source is a good way to ensure balance and diversity in each cluster. Another example is clustering a number of agents into committees, where it is desirable that no committee is dominated by agents of a certain background. See~\citet{celis2018multiwinner} for an example where a similar constraint is applied to the problem of selecting a single committee maximizing a certain scoring function.

\para{Our contributions}
In this paper we formulate the problem of clustering without over-representation and study its algorithmic properties. For the clustering part, we focus on the $k$-center formulation. While there are many different well-studied models for clustering (such as $k$-median, $k$-means, $k$-center, or correlation clustering), we have picked the $k$-center model because of its theoretical simplicity (which allows us to prove good theoretical bounds) as well as the strong guarantees that are useful in many applications (that every point in a cluster is close to the center of that cluster).

Our formulation of the problem is in terms of a parameter $\alpha$ that specifies the maximum fraction of nodes in a cluster that have a specific value for the protected feature. Our main results are the following. First, for the case of $\alpha = 1/2$, we obtain a combinatorial approximation algorithm. Note that $\alpha = 1/2$ is a canonical case as it corresponds to ensuring that no cluster is dominated by a group with an absolute majority. Second, for the case of general $\alpha$, we give an approximation algorithm based on linear programming (LP) that achieves a bicriteria approximation guarantee. We also prove that the problem is NP-hard to approximate. Finally, we evaluate our LP-based algorithm on a number of real data sets, showing that its performance is even better than the theoretical guarantees.

\para{Related work}
Clustering is a classical problem in unsupervised learning and finds application in a variety of settings (see, e.g.,~\citet{jain2010data}); examples include information retrieval, image segmentation, and targeted marketing. The most popular clustering formulation studies the problem under an optimization objective that minimizes the $\ell_p$ norm for $p\in\{1,2,\infty\}$ corresponding to $k$-median, $k$-means, and $k$-center, respectively. In this work, our focus is on the $k$-center case, which admits a $2$-approximation~\cite{gonzalez1985clustering, hochbaum1985best} and is NP-hard to approximate within a factor better than $2$~\cite{hsu1979easy}. 

Fairness in machine learning is relatively new but has received a significant amount of attention. This includes research on defining notions of fairness~\cite{calders2010three, dwork2012fairness, feldman2015certifying, kamishima2012fairness} and on designing algorithms that respect fairness~\cite{celis2018multiwinner, celis2017ranking,chierichetti2017fair, joseph2016fairness,kamishima2012fairness, yang2017measuring,backursfair}.  A recent line of work considers batch classification algorithms that achieve group fairness or equality of outcomes and avoid disparate impact~\cite{calders2010three, feldman2015certifying, kamishima2011fairness, fish2016confidence}.

\citet{chierichetti2017fair} extended the notion of disparate impact to clustering problems and studied the fair $k$-center problem in the case there are only two groups (also called colors). This was later generalized by ~\citet{rosner2018privacy} to multiple groups, achieving a $14$-approximation algorithm in the general case. Even with two colors, the problem is challenging, and the optimum solution can violate common conventions, e.g., a point may not necessarily be assigned to the closest open center. 
The main difference between our work and that of~\cite{chierichetti2017fair,rosner2018privacy} is that the latter focuses on the problem of finding a clustering where the distribution of colors is in each cluster is exactly the same as the distribution of colors over all given data points, whereas we only require that in each cluster, the fraction of nodes of each color is at most a given threshold. Note that requiring exact proportional representation in each cluster is often prohibitively restrictive. For example, if the number of times different colors appear in the graph are relatively prime, there is no non-trivial feasible clustering in the setting of \citet{chierichetti2017fair}, whereas our formulation often admits non-trivial solutions.  

Concurrently and independently, Bera et al.~\cite{bera2019fair} and Bercea et al.~\cite{bercea2018fair} obtained algorithms to convert an arbitrary clustering solution to a fair one, sacrificing both approximation and fairness.  They provide bicriteria approximations for a more general problem (with upper and lower bound on the representation of a color). 
Our algorithm, however, is simpler and we prove (at most) an additive 2 violation for the fairness constraint (improved to 1 for a special case) in contrast to Bera et al.~\cite{bera2019fair} who prove an additive 4 violation and Bercea et al.~\cite{bercea2018fair} who do not bound the additive violation. 

There has been some work on clustering with diversity~\cite{liyizhang}, where the objective is to ensure each cluster has at least a certain number of colors; our objective is clearly different from this.  The large body of work on clustering with constraints~\cite{constrainedclustering}, to the best of our knowledge, does not address the over-representation constraint.

\para{Outline of the paper}
In Section~\ref{sec:prelim} we formalize the problem of finding an $\alpha$-capped $k$-center clustering.  In Section~\ref{sec:general} we present our main theoretical result, an LP-based algorithm for the general $\alpha$ case. Later, in Section~\ref{sec:half} we provide a purely combinatorial algorithm for the $\alpha=\frac{1}{2}$ case.  Then in Section~\ref{sec:experiments} we report the results of our empirical studies.  In Section~\ref{sec:hardness} we show that obtaining a decomposition in $\alpha$-capped clusters of minimum cost is hard for $\alpha \le 1/2$ irrespective of the constraint on the number of clusters.  Finally, in Section~\ref{sec:conclusion} we discuss future avenues of research.
\section{Model and Preliminaries}
\label{sec:prelim}

In the \emph{$k$-clustering} problem, we are given a set $D$ of points in a metric space with the distance function $d(\cdot, \cdot)$ and an integer bound $k$, and the goal is to cluster the points into at most $k$ clusters $\cC_1, \cC_2, \ldots, \cC_{k}$. Various clustering problems have been studied and in this paper, we focus on $k$-center clustering. We define the problem in terms of {\em facility location} terminology where points are referred to as {\em clients} and clusters are defined by the assignment of clients to {\em centers} (also called {\em facilities}). An instance $\cI=(D, F, d, k)$ of \emph{$k$-center} consists of a client set $D$, a facility set $F = D$, a metric space $d: D\times D\rightarrow \RRN $, and a positive integer bound $k$. A feasible $k$-center solution is a pair $(F', \sigma)$, where $F'\subseteq F$ is a set of at most $k$
facilities and $\sigma: D\rightarrow F'$ is a mapping that assigns each client $j$ to a facility $\sigma(j)\in F'$. The goal is to find a feasible solution that minimizes the {\em maximum radius} or {\em clustering cost} defined as $\lambda(F', \sigma) = \max_{j\in D} d(j, \sigma(j))$. Of course, in the classic $k$-center problem, once the set $F'$ is determined, assigning each client to the closest facility in $F'$ yields the assignment with minimum objective.  With additional constraints, however, the closest assignment might be infeasible.

Even though the standard $k$-center problem is computationally hard, it admits an elegant 2-approximation algorithm~\cite{hochbaum1985best}: first select an arbitrary point as center, then, iteratively pick the next center to be the point that is farthest from all currently chosen centers, until  $k$ centers are chosen.  For completeness, we present it below (Algorithm~\ref{alg:greedy}).  
\begin{algorithm}
\caption{Greedy-$k$-center($\cI = (D, D\subseteq F, d, k\ge 1)$).}\label{alg:greedy}
\begin{algorithmic}[1]
\State $i_0 \gets$ an arbitrary client in $D$.
\State $F' = \{i_0\}$
\For {$l \in \{1, 2, \ldots, k-1\}$}
\State $i_l \gets \arg\max_{j\in D} \min_{i \in F'}d(j, i)$, the furthest client from $F'$
\State $F' \gets F' \cup \{i_l\}$
\EndFor
\State $\forall j\in D: \sigma(j) \gets i = \argmin_{i\in F'} d(i,j)$
\State $\lambda \gets \lambda(F', \sigma)$
\State \Return $((F', \sigma), \lambda)$
\end{algorithmic}
\end{algorithm}
In this paper, we consider the {\em $\alpha$-capped $k$-center} problem where points have colors and we have a constraint on the representation of each color in each cluster. More precisely, in an $\alpha$-capped $k$-center instance $\cI = (D, F, d, k, \alpha, c)$, in addition to the input of classical $k$-center, we are given a fractional bound $\alpha \in (0, 1]$ and a color $c(j)$  for each point $j \in D$. We use $D_c$ to denote the set of clients of color $c$.  A feasible solution $(F', \sigma: D\rightarrow F' )$ is a feasible $k$-center solution that satisfies the {\em representation constraint}, which states that for each color $c$ and each facility $i$, the total number of clients of color $c$ assigned to $i$ should be no more than $\alpha$ fraction of all clients assigned to $i$. This constraint can be written as  
\mymath{
\forall i\in F', c:~~|\sigma^{-1}(i) \cap D_c| \leq \alpha |\sigma^{-1}(i)|.
}
The goal in $\alpha$-capped $k$-center problem is to find a feasible solution $(F', \sigma)$ that minimizes 
\mymath{
\lambda(F', \sigma) = \max_{j \in D} d(j, \sigma(j)).
}
Let $(F^*, \sigma^*)$ be the optimal clustering, and let $\lambda^* = \lambda(F^*, \sigma^*)$ be the optimal clustering cost.  A {\em $\rho$-approximation algorithm}, for $\rho \geq 1$, outputs a clustering $(F', \sigma)$ such that $\lambda(F', \sigma) \leq \rho \cdot \lambda(\sigma^*,C^*)$.  
\section{A general algorithm}
\label{sec:general}

We present a general algorithm to solve the $\alpha$-\capped $k$-center clustering problem.  The main idea is to first solve a linear program (LP) relaxation of the problem to obtain a fractional solution and then modify the fractional solution---sacrificing a little both in the approximation factor and in the representation constraint---to get an integral solution. In the course of doing this, we will get what is called a \emph{bicriteria} algorithm, i.e., while we get a constant-factor approximation to $\alpha$-\capped $k$-center, our solution will violate the $\alpha$ upper bound mildly. In fact, we can show that for each color and each facility, there are at most two extra clients in addition to the allowed number of clients, so the cap is violated additively by at most two additional nodes---a negligible quantity for a large cluster. 

\subsection{An LP formulation}
For a given distance $\lambda \in \RRN$, consider the problem of finding a feasible assignment of clients to facilities in such a way that the clustering cost of the solution is at most $\lambda$. This problem can be formulated using the following integer program (IP).
{\small
\begin{eqnarray}
&\sum_{i\in F} x_{ij} \geq 1 &\forall j \in D, \label{ip_client_fully_assigned}\\
& x_{ij} \leq y_i &\forall i\in F, j\in D, \label{ip_facility_open_if_client_assgined}\\
&\sum_{j\in D_c} x_{ij} \leq \alpha \cdot \sum_{j\in D} x_{ij} &\forall c\in [t] , i\in F,\label{ip_facility_color_threshold}\\
&\sum_{i\in F} y_i \leq k, &\label{ip_total_open}\\
& x_{ij}, y_i\in \{0,1\} &\forall i \in F, j \in D, \label{ip_integrality}\\
&  x_{ij} = 0  &\forall i\in F, j \in D, d(i,j) > \lambda. \label{ip_x_0}
\end{eqnarray}
}
Here, the indicator variable $y_i$ denotes if facility $i \in F$ is open or not and the indicator variable $x_{ij}$ denotes if client $j$ is assigned to facility $i$.  Note that by constraint~\eqref{ip_x_0}, $x_{ij}$ can take non-zero value only if facility $i\in F$ is at distance at most $\lambda$ from client $j\in D$.  Constraint~\eqref{ip_facility_open_if_client_assgined} captures that a facility must be open if it has a client assigned to it,~\eqref{ip_facility_color_threshold} captures the representation constraint, and~\eqref{ip_total_open} captures that the total number of open facilities is at most $k$.

Before relaxing the integrality constraint of the above IP, we strengthen it by adding the following constraint: if a facility $i$ is open, it has to serve at least $\lceil\frac{1}{\alpha}\rceil$ clients to satisfy the representation constraint. Therefore, every integral solution of the above program must satisfy the inequality $\sum_{j\in D} x_{ij} \geq \lceil\frac{1}{\alpha}\rceil \cdot y_i$.

We consider the following LP obtained by adding this constraint and relaxing the integrality constraint~\eqref{ip_integrality}. We use $\cP(\lambda, \alpha)$ to denote the polytope defined by this LP.
{\small
\begin{eqnarray}
&\sum_{i\in F} x_{ij} \geq 1 &\forall j \in D, \nonumber \\
& x_{ij} \leq y_i &\forall i\in F, j\in D, \nonumber \\
&\sum_{j\in D_c} x_{ij} \leq \alpha \cdot \sum_{j\in D} x_{ij} &\forall c\in [t] , i\in F,\label{lp_facility_color_threshold}\\
& \sum_{j\in D} x_{ij} \geq \lceil\frac{1}{\alpha}\rceil \cdot y_i &\forall i\in F, \nonumber \\
&\sum_{i\in F} y_i \leq k, \nonumber \\
& 0 \leq y_i \leq 1 &\forall i \in F, \nonumber \\
& 0 \leq x_{ij} \leq 1 &\forall i \in F, j \in D, \nonumber \\
&  x_{ij} = 0  &\forall i\in F, j \in D, d(i,j) > \lambda. \nonumber
\end{eqnarray}
}
As mentioned above, we present a bicriteria algorithm that finds a solution that might violate the representation constraint, i.e., constraint~(\ref{lp_facility_color_threshold}). We use the notation $\cP (\lambda, \alpha, \Delta)$, for $\Delta \in \RRN$, to denote the set of points that satisfy all the constraint for $\cP(\lambda, \alpha)$ except constraint~(\ref{lp_facility_color_threshold}) and only violate that constraint with an additive error of $\Delta$, i.e., $\sum_{j\in D_c} x_{ij} \leq \alpha \cdot \sum_{j\in D} x_{ij} + \Delta$. Note that $\cP(\lambda, \alpha) = \cP(\lambda, \alpha, 0)$. 

\subsection{Outline}

Recall that $\lambda^*$ is the value of the optimal solution to the problem.  The main idea in our algorithm is that, since the polytope $\cP(\lambda^*, \alpha)$ is non-empty, by binary search, we can first find the smallest value $\lambda'$ such that $\cP(\lambda', \alpha)$ is non-empty (since the set of distances between pairs of points is finite). Note that the non-emptiness check via solving the LP also yields a point $(x, y) \in \cP(\lambda', \alpha)$, which is a fractional solution to the LP.  The plan then is to use $(x, y)$ to construct a feasible integral solution in a slightly larger polytope, namely, $(x'', y'') \in \cP(3 \lambda',\alpha, 2)$, where $x'', y''$ are integral and hence will correspond to a valid solution to the $k$-center problem.
\begin{theorem}
\label{thm:general}
Given an instance $\cI$ of $\alpha$-\capped $k$-center clustering, there is a polynomial time algorithm that finds a solution $(F', \sigma)$ of  cost at most $3\lambda^*$ such that  
$$
|\sigma^{-1}(i) \cap D_c| \leq \alpha \cdot |\sigma^{-1}(i)| + 2.
$$
In the case of $1/\alpha \in \bZ$, we can actually improve the additive term to $1$ and in term of multiplicative bound we get $|\sigma^{-1}(i) \cap D_c| \leq  2\alpha |\sigma^{-1}(i)|$.
\end{theorem}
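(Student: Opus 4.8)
The plan is to follow the outline: use binary search over the finitely many pairwise distances to find the smallest $\lambda'$ with $\cP(\lambda',\alpha)$ nonempty, obtaining a fractional point $(x,y)$; since $(F^*,\sigma^*)$ certifies $\cP(\lambda^*,\alpha)\ne\es$ we have $\lambda'\le\lambda^*$, so it suffices to round $(x,y)$ into an integral point of $\cP(3\lambda',\alpha,2)$. I would organize the rounding into (i) choosing the open facilities, (ii) producing a color-feasible \emph{fractional} assignment to them of radius $3\lambda'$, and (iii) rounding that assignment integrally while losing only an additive $2$ on each color.

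For (i), I would exploit that $F=D$ and greedily extract a maximal subset $M\subseteq D$ whose points are pairwise more than $2\lambda'$ apart, opening exactly the points of $M$. Maximality gives that every point of $D$ (client \emph{or} facility) lies within $2\lambda'$ of $M$. To bound $|M|\le k$, note each $m\in M$ is a client with $\sum_{i}x_{im}\ge1$ and $x_{im}>0\Rightarrow d(i,m)\le\lambda'$, so the ball $B(m,\lambda')$ carries $y$-mass $\sum_{i\in B(m,\lambda')}y_i\ge\sum_i x_{im}\ge1$; these balls are disjoint because $M$ is $2\lambda'$-separated, hence $|M|\le\sum_i y_i\le k$. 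For (ii), I would assign each fractionally open facility $i$ (those with $y_i>0$) to a monarch $g(i)\in M$ with $d(i,g(i))\le2\lambda'$ (possible by maximality; sending every $i\in B(m,\lambda')$ to $m$ itself keeps $\sum_{i:g(i)=m}y_i\ge1$), and set $x'_{mj}=\sum_{i:g(i)=m}x_{ij}$. Any client $j$ with $x_{ij}>0$ satisfies $d(j,g(i))\le d(j,i)+d(i,g(i))\le3\lambda'$, so $x'$ has radius $3\lambda'$ and keeps each client fully assigned; crucially the representation constraint survives because it is homogeneous: summing $\sum_{j\in D_c}x_{ij}\le\alpha\sum_{j}x_{ij}$ over the facilities merged into $m$ yields $\sum_{j\in D_c}x'_{mj}\le\alpha\sum_{j}x'_{mj}$. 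Writing $L_{mc}=\sum_{j\in D_c}x'_{mj}$ and $n_m=\sum_j x'_{mj}$, we thus get $L_{mc}\le\alpha n_m$ and $n_m\ge\lceil1/\alpha\rceil$ at every open $m$.

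The main obstacle is (iii): rounding $x'$ to an integral assignment $\sigma$ that simultaneously controls each color \emph{from above} and each cluster size \emph{from below}. I would phrase this as a transportation/flow problem on the bipartite graph of client–monarch pairs with $x'_{mj}>0$, with a node per client (supply $1$), a node per color-slot $(m,c)$ capped at $\lceil L_{mc}\rceil$, and a per-monarch node carrying a lower bound $\lfloor n_m\rfloor$ on its total intake; $x'$ itself is a feasible fractional flow, and since the incidence matrix of a flow network is totally unimodular with integral bounds, an integral flow exists, i.e.\ an integral $\sigma$ with $|\sigma^{-1}(m)\cap D_c|\le\lceil L_{mc}\rceil$ and $|\sigma^{-1}(m)|\ge\lfloor n_m\rfloor$.

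The additive guarantee then follows by arithmetic: $|\sigma^{-1}(m)\cap D_c|\le\lceil L_{mc}\rceil\le\alpha n_m+1\le\alpha(|\sigma^{-1}(m)|+1)+1\le\alpha|\sigma^{-1}(m)|+2$. When $1/\alpha=p\in\bZ$, the same chain sharpens through the elementary fact $\lceil n_m/p\rceil\le\lfloor n_m\rfloor/p+1$ (the smallest multiple of $p$ above $n_m$ exceeds the largest integer below $n_m$ by less than $p$), giving $|\sigma^{-1}(m)\cap D_c|\le\alpha|\sigma^{-1}(m)|+1$; and since $|\sigma^{-1}(m)|\ge\lfloor n_m\rfloor\ge1/\alpha$ forces $\alpha|\sigma^{-1}(m)|\ge1$, the additive $1$ is absorbed to yield the multiplicative bound $|\sigma^{-1}(m)\cap D_c|\le2\alpha|\sigma^{-1}(m)|$. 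I would double-check that the flow rounding can indeed enforce the two-sided bounds jointly (the only delicate point, since the cluster totals must also sum to $|D|$ and hence not all can round up), but the totally unimodular structure makes this routine.
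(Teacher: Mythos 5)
Your proposal is correct and follows essentially the same route as the paper: binary search over the LP polytope $\cP(\lambda,\alpha)$, a maximal $2\lambda'$-separated set of open facilities with the fractional assignment rerouted to them (radius $3\lambda'$ by the triangle inequality, with the homogeneous representation constraint preserved under summation), and an integral flow in a network with floor/ceiling bounds per (facility, color) pair and per facility, followed by the same arithmetic for the additive $2$, the additive $1$ when $1/\alpha\in\bZ$, and the multiplicative $2\alpha$ bound via $|\sigma^{-1}(m)|\ge 1/\alpha$. Your packing argument for $|M|\le k$ (disjoint balls $B(m,\lambda')$ each carrying $y$-mass at least $1$) is in fact a cleaner justification than the paper's appeal to the integral optimum, and your "delicate point" about the two-sided flow bounds is handled exactly as you suspect, by integrality of flow polyhedra with integral bounds.
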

To prove Theorem~\ref{thm:general}, the integral solution $(x'', y'')$ is constructed from $(x, y)$ in two steps.  In the first step, we construct a solution $(x', y') \in \cP(3\lambda', \alpha)$ using $(x, y)$, where $y'$ is integral.  This step can be thought of as determining which facilities to open based on the fractional solution.  In the second step, we construct an integral solution $(x'', y'')\in \cP(3\lambda', \alpha, 2)$.  This step uses the open facilities to define a suitable maximum flow problem to obtain an assignment of clients to  facilities.  We describe these two steps.

\subsection{Finding facilities to open}\label{subset:findfac}

The goal in this step is to find $(x', y') \in \cP(3\lambda', \alpha)$ where $y'$ is integral.  Let $F'\subseteq F$ be a maximal subset of facilities such that any two facilities $i,i'\in F'$ are at least distance $2\lambda'$ from each other, i.e., $d(i,i') > 2\lambda'$.
We open all facilities in $F'$, i.e., set  $y'_i = 1$ for $i \in F'$ and $y'_i = 0$ for $i\notin F'$. Note that if $\lambda'$ is a correct guess of the optimum, none pair of clients at locations in $F'$ can be served by the same center and so the size of $F'$ is smaller than or equal to $k$. Next, we show how to define $x'$. We essentially transfer the fractional assignment of clients from $F$ to $F'$.  First we define a mapping $\theta:\{i\in F ~\mid~ y'_i>0\} \rightarrow F'$ as
\begin{itemize}
        \item If $i\in F'$, then $\theta(i) =i$.
        \item If $i\notin F'$, then $\theta(i) = i'$ where $i' \in F'$ with $d(i,i') < 2\lambda'$.  (Such an $i'$ exists by the maximality of $F'$.)
\end{itemize} 
Now for each client $j \in D$, we can define 
\begin{equation*}
x'_{ij} = \left\{
\begin{array}{cc}
\sum_{i'\in \theta^{-1}(i)} x_{i'j} & i \in F' \\
0 & \mbox { otherwise. } 
\end{array}
\right.
\end{equation*}
We now show that $(x', y')$ has the desired properties.
\begin{lemma}
    $(x',y') \in \cP(3\lambda', \alpha)$ and $y'$ is integral.
\end{lemma}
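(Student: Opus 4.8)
The plan is to verify that $(x',y')$ satisfies every defining inequality of $\cP(3\lambda',\alpha)$; integrality of $y'$ is immediate from its definition, and $0\le x'_{ij}\le 1$, $0\le y'_i\le 1$ reduce to the opening constraint checked below. Two structural facts drive everything. First, since $\theta$ is a function into $F'$, the preimages $\{\theta^{-1}(i)\}_{i\in F'}$ partition the set of facilities carrying positive mass, so aggregating $x$ onto $F'$ neither creates nor destroys fractional assignment. Second, the packing property of $F'$ (any two of its facilities are more than $2\lambda'$ apart) implies, by the triangle inequality, that each facility of $F$ lies within distance $\lambda'$ of at most one member of $F'$. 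I will take $\theta$ to send each facility to its \emph{nearest} member of $F'$, and I will assume, after a preliminary normalization, that every client is fully assigned, $\sum_{i} x_{ij}=1$.

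The routine inequalities then fall out directly; note first that for a closed facility $i\notin F'$ we have $x'_{ij}=0$, so every facility-indexed constraint holds trivially there. For client coverage, the partition property gives $\sum_{i\in F'} x'_{ij}=\sum_{i':y_{i'}>0} x_{i'j}=\sum_{i'\in F} x_{i'j}\ge 1$ (the last equality since $x_{i'j}=0$ whenever $y_{i'}=0$). For the representation constraint I simply sum the ratio constraints of $x$ over $\theta^{-1}(i)$:
\[
\sum_{j\in D_c} x'_{ij}=\sum_{i'\in\theta^{-1}(i)}\sum_{j\in D_c} x_{i'j}\le \alpha\sum_{i'\in\theta^{-1}(i)}\sum_{j} x_{i'j}=\alpha\sum_j x'_{ij},
\]
which works precisely because a sum of fractions each bounded by $\alpha$ of its own denominator is again bounded by $\alpha$ of the total. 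The cost constraint holds because $x'_{ij}>0$ forces some $i'\in\theta^{-1}(i)$ with $x_{i'j}>0$, whence $d(i,j)\le d(i,i')+d(i',j)<2\lambda'+\lambda'=3\lambda'$; this is exactly the source of the factor $3$. Finally $\sum_i y'_i=|F'|\le k$ follows from a packing argument: viewing each $i\in F'$ as a client, its serving facilities lie within $\lambda'$ of $i$, these supports are pairwise disjoint across $F'$ by the packing property, and each carries $y$-mass at least $\sum_{i''}x_{i''i}\ge 1$, so $k\ge\sum_{i''}y_{i''}\ge|F'|$.

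The two delicate constraints, and the real obstacle, are the opening constraint $x'_{ij}\le y'_i$ and the minimum-load constraint $\sum_j x'_{ij}\ge\ceil{1/\alpha}\,y'_i$. For the former, on an open facility $y'_i=1$ and I need $x'_{ij}=\sum_{i'\in\theta^{-1}(i)}x_{i'j}\le 1$; this is where full assignment is essential, since it bounds the aggregated mass of any single client by its total mass $\sum_{i'}x_{i'j}=1$. For the latter, I use the nearest-facility choice of $\theta$: by the packing property the facilities serving the client $i$ (all within $\lambda'$ of $i$) have $i$ as their unique nearest center, so they all lie in $\theta^{-1}(i)$; combining with the original load bound $\sum_j x_{i'j}\ge\ceil{1/\alpha}\,y_{i'}$ gives $\sum_j x'_{ij}\ge\ceil{1/\alpha}\sum_{i'\in\theta^{-1}(i)}y_{i'}\ge\ceil{1/\alpha}\sum_{i''}x_{i''i}\ge\ceil{1/\alpha}$. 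I expect the minimum-load constraint to be the crux: it is the one place where an arbitrary choice of $\theta$ (within $2\lambda'$) can fail, and it is what forces routing the fractional demand surrounding each open center back onto that center.
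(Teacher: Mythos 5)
Your proof is correct and necessarily follows the same construction as the paper's (aggregate $x$ onto the packing $F'$ via $\theta$, triangle inequality for the factor $3$, sum the per-facility color constraints over $\theta^{-1}(i)$ for the representation bound). Where you genuinely diverge is in completeness: the paper's proof only checks the coverage, opening, cardinality, color, and distance constraints, and is silent on the minimum-load constraint $\sum_{j}x_{ij}\ge\lceil 1/\alpha\rceil y_i$, which is part of the definition of $\cP(3\lambda',\alpha)$ and is needed later (it is what makes $B''\ge m$ in the corollary following Lemma 3.3). Your treatment of it is a real addition, and your diagnosis is accurate: with the paper's $\theta$, which may send a facility $i''$ to \emph{any} $i_2\in F'$ within $2\lambda'$, a facility at distance $\le\lambda'$ from $i\in F'$ can legitimately be mapped to some $i_2$ with $d(i,i_2)\in(2\lambda',3\lambda']$, so the fractional mass serving the co-located client $i$ need not return to $i$ and the load at $i$ could be as small as $\lceil 1/\alpha\rceil y_i$ with $y_i$ tiny. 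Your nearest-center choice of $\theta$ repairs this, since any $i''$ with $x_{i''i}>0$ then has $i$ as its strict nearest member of $F'$, giving $\sum_{i'\in\theta^{-1}(i)}y_{i'}\ge\sum_{i''}x_{i''i}\ge 1$. Your packing argument for $|F'|\le k$ (disjoint $\lambda'$-balls each carrying $y$-mass at least $1$) is also a cleaner, purely LP-based substitute for the paper's appeal to the correctness of the guess $\lambda'$.

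One soft spot: the ``preliminary normalization'' to $\sum_i x_{ij}=1$, which you correctly identify as essential for $x'_{ij}\le y'_i=1$, cannot be achieved by simply scaling each client's assignments down by $1/\sum_i x_{ij}$ --- shrinking the assignment of a client of color $c'$ to facility $i$ shrinks the denominator of the constraint for every other color $c$ at $i$ and can violate it (e.g., one red client with total mass $1$ and one blue client with total mass $2$ both fully split on $i$: the red fraction rises from $1/2$ to $2/3$ after scaling). The clean fix is to impose $\sum_i x_{ij}=1$ as an equality in the LP itself, which is without loss since the integral optimum satisfies it; state that explicitly rather than leaving the normalization unspecified.
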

\begin{proof}
Let us first show that $x'_{ij}$ can only take non-zero value if facility $i$ is at distance $3\lambda'$ from it. If $x'_{ij}$ is non-zero, then there exists a facility $i'$ where $\theta(i')=i$ and $x_{ij} > 0$. Since $x_{ij} > 0$, we get that $d(i',j) < \lambda'$ and since $\theta(i') = i$, $d(i,i') < 2\lambda'$, so by the triangle inequality, we have $d(j,i) \leq 3\lambda'$. Since $x'$ is just rerouting the assignment of clients from facilities in $F$ to $F'$, $y_i = 1$ for all facilities in $F'$, and $F'$ has at most $k$ facilities, $(x',y')$  satisfy Constraints ~\eqref{ip_client_fully_assigned}, \eqref{ip_facility_open_if_client_assgined}, and \eqref{ip_total_open}. Constraint \eqref{ip_facility_color_threshold} is satisfied since for each $i\in F', c\in [t]$,
$$
\sum_{j\in D_c} x'_{ij} = \sum_{i'\in \theta^{-1}(i)}\sum_{j\in D_c} x_{i'j} \leq \alpha \cdot\sum_{i'\in \theta^{-1}(i)}\sum_{j\in D} x_{ij} = \alpha \cdot \sum_{j\in D} x'_{ij},
$$
where the inequality follows from the definition of $\theta$.
\end{proof}
    
\subsection{Assigning clients to facilities}\label{subsec:clientassgn}

The goal in this step is to construct a solution $(x'', y'') \in \cP(3\lambda', \alpha, 2)$ such that $x''$, $y''$ are integral.  In fact, $x''_{ij} > 0$ only if $x'_{ij} > 0$.
We let $(x'', y'')$ be the solution to the following maximum flow problem and use the fact that a network with integral bound on edges and integral demands, if feasible, always has an integral solution. 

Construct a flow network $(V, A)$ as follows:
\begin{itemize}
        \item $V = \{s,t\} \cup D \cup \{(i,c) ~\mid~ i\in F', c\in [t]\}$.
        \item $A = A_1 \cup A_2 \cup A_3 \cup A_4$ where $A_1=\{(s,j) ~\mid~ j\in D\}$ with capacity $1$, $A_2=\{(j,(i,c)) ~\mid~ j\in D_c, x'_{ij}>0\}$ with capacity $1$, $A_3=\{((i,c),i)\}$ with lower bound $\lfloor\sum_{j\in D_c}x'_{ij}\rfloor$ and capacity $\lceil\sum_{j\in D_c}x'_{ij}\rceil$, and $A_4=\{(i,t)\}$ with lower bound $\lfloor \sum_{j\in D} x'_{ij}\rfloor$ and capacity $\lceil \sum_{j\in D} x'_{ij}\rceil$.
\end{itemize}
Note that $(x',y')$ is a feasible flow of value $|D|$, so there is an integral flow of value $|D|$ such that a client $j$ sends a flow to a facility $i$ if $x'_{ij} > 0$. Thus $x''_{ij} > 0$ only if client $j$ is at distance $3\lambda'$ from facility $i$. This concludes the steps of our algorithm (Algorithm~\ref{alg:fair}).
\begin{algorithm}
\caption{Fair-$k$-center($\cI =(D, F, d, k), \alpha, \lambda$).}\label{alg:fair}
\begin{algorithmic}[1]
\State $(x, y) \gets \text{a feasible solution of } \cP(\lambda, \alpha)$
\If {$\cP(\lambda, \alpha)$ is empty}
\State\Return $(\emptyset, \emptyset)$
\EndIf
\State $F'\gets $ a maximal subset of $F$ where $\forall i\neq i'\in F': d(i,i') > 2\lambda$
\State $(x',y') \gets$ client reassignment based on $F'$ (Section~\ref{subset:findfac})
\State $(x'', y'') \gets $ client assignment based on max flow in network $(V,A)$ (Section~\ref{subsec:clientassgn})
\State $F^s \gets \{i ~\mid~ y''_i > 0\}$
\State $\forall j\in D$: $\sigma^s(j) \gets i$ where $x''_{ij} > 0$
\State\Return $(F^s, \sigma^s)$
\end{algorithmic}
\end{algorithm}
It remains to bound the violation of the representation constraint.
\begin{lemma}\label{lem:alpha_violation}For any color $c$ and any facility $i$, $\sum_{j\in D_c} x''_{ij} \leq \alpha \cdot \sum_{j\in D} x''_{ij} + 2$ where the additive term can be improved to $+1$ for $1/\alpha \in \bZ^+$.
\end{lemma}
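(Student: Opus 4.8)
The plan is to analyze the max-flow network from Section~\ref{subsec:clientassgn} and show that the integral flow $x''$, while rerouting clients relative to the fractional $x'$, keeps the per-color counts within an additive constant of the threshold. The key observation is that the flow network was deliberately constructed with the representation constraint baked into the capacities on the arcs $A_3$ and $A_4$: the arc $((i,c),i)$ has capacity $\lceil\sum_{j\in D_c}x'_{ij}\rceil$ and the arc $(i,t)$ has lower bound $\lfloor\sum_{j\in D}x'_{ij}\rfloor$. So first I would fix an arbitrary facility $i\in F'$ and color $c$, and observe that since $x''$ is a feasible integral flow respecting these bounds, the realized count satisfies
$$
\sum_{j\in D_c} x''_{ij} \;\le\; \Big\lceil \sum_{j\in D_c} x'_{ij} \Big\rceil,
\qquad
\sum_{j\in D} x''_{ij} \;\ge\; \Big\lfloor \sum_{j\in D} x'_{ij} \Big\rfloor.
$$

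**Next** I would combine these with the fact, established in the previous lemma, that $(x',y')\in\cP(3\lambda',\alpha)$ and hence $x'$ satisfies the (exact) representation constraint~\eqref{lp_facility_color_threshold}, namely $\sum_{j\in D_c} x'_{ij}\le\alpha\sum_{j\in D} x'_{ij}$. Writing $N_c=\sum_{j\in D_c}x'_{ij}$ and $N=\sum_{j\in D}x'_{ij}$, the bound becomes
$$
\sum_{j\in D_c} x''_{ij}\;\le\;\lceil N_c\rceil\;\le\;N_c+1\;\le\;\alpha N+1,
\qquad
\sum_{j\in D} x''_{ij}\;\ge\;\lfloor N\rfloor\;\ge\;N-1.
$$
Hence $\sum_{j\in D_c}x''_{ij}\le \alpha N+1\le \alpha(\sum_{j\in D}x''_{ij}+1)+1 = \alpha\sum_{j\in D}x''_{ij}+\alpha+1$, and since $\alpha\le1$ this gives the additive $+2$ violation. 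I would phrase the ceiling/floor slack carefully so that the two sources of error (rounding the color count up, rounding the cluster size down) are accounted for separately.

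**For the improvement** to $+1$ when $1/\alpha\in\bZ^+$, I would exploit the strengthened LP constraint $\sum_{j\in D}x'_{ij}\ge\lceil 1/\alpha\rceil\, y'_i=1/\alpha$ for every open facility, so the cluster is genuinely large. The cleaner route is to note that when $\alpha=1/q$ for an integer $q$, the threshold $\alpha\sum_{j\in D}x''_{ij}$ applied to the \emph{integral} quantity $\sum_{j\in D}x''_{ij}$ interacts nicely with integrality: if $\alpha N$ is already forced to be an integer multiple, then $\lceil N_c\rceil\le \alpha N$ may hold outright, or at worst the combined floor/ceiling slack collapses to a single unit. I expect the main obstacle to lie exactly here—pinning down why integrality of $1/\alpha$ saves one unit requires arguing that the lower bound $\lfloor N\rfloor$ on the cluster size and the ceiling $\lceil N_c\rceil$ on the color count cannot both lose a full unit simultaneously, rather than simply adding the two errors as in the general case. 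A case analysis on whether $N$ is integral, together with the representation constraint $N_c\le\alpha N$ and $\alpha=1/q$, should resolve it.
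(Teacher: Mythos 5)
Your argument for the additive $+2$ bound is correct and is essentially identical to the paper's: bound the integral color count by the ceiling capacity on the arc $((i,c),i)$, bound the integral cluster size from below by the floor on the arc $(i,t)$, and chain these through the exact constraint $\sum_{j\in D_c}x'_{ij}\le\alpha\sum_{j\in D}x'_{ij}$ to get $\sum_{j\in D_c}x''_{ij}<\alpha\sum_{j\in D}x''_{ij}+\alpha+1\le\alpha\sum_{j\in D}x''_{ij}+2$. No issues there.

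The $+1$ improvement, however, is left as a genuine gap: you correctly locate the difficulty but do not resolve it, and the two resolutions you hint at point in slightly the wrong direction. It is \emph{not} true that the two unit losses (rounding the color count up, rounding the cluster size down) cannot co-occur, and the relevant integrality is not that of $N=\sum_j x'_{ij}$ (which may well be fractional) but that of the \emph{output} quantities $T''=\sum_{j\in D_c}x''_{ij}$ and $B''=\sum_{j\in D}x''_{ij}$. The paper's argument accepts both losses, keeps the resulting inequality strict, $T''<\alpha B''+\alpha+1$, and then rounds down using integrality of $T''$: writing $\alpha=1/m$ and $B''=pm+r$ with $0\le r\le m-1$, one has $\alpha B''+\alpha+1=p+1+\frac{r+1}{m}$, and since $\frac{r+1}{m}\le 1$ the largest integer strictly below this is $p+1$; hence $T''\le p+1\le \alpha B''+1$. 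Your proposed case analysis ``on whether $N$ is integral'' would not by itself deliver this, since even for integral $N$ the fractional color mass $N_c$ can round up by nearly a full unit. So the missing step is short but real: you need the strictness of the chained inequality plus integrality of $T''$ and $B''$, not a claim that the two rounding errors exclude each other. (The strengthened constraint $\sum_j x'_{ij}\ge\lceil 1/\alpha\rceil y'_i$ that you invoke is not needed for the additive $+1$ bound; the paper uses it only for the multiplicative $2\alpha$ corollary, via $B''\ge m$.)
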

\begin{proof}
Let $T' = \sum_{j\in D_c} x'_{ij}$, $B' = \sum_{j\in D} x'_{ij}$,  $T'' = \sum_{j\in D_c} x''_{ij}$, and $B'' = \sum_{j\in D} x''_{ij}$.  Since $(x',y')$ is a feasible solution of $\cP(\lambda', \alpha)$, we have $T'\leq \alpha\cdot B'$. Using the lower bounds and upper bounds on the edge $((i,c), i) \in A_3$, we know that $\lfloor T'\rfloor \leq T'' \leq \lceil T' \rceil$ and $\lfloor B'\rfloor \leq B'' \leq \lceil B' \rceil$. Since $\lceil T' \rceil < T' + 1$, we can bound $T''$ in terms of $B''$ as follows: 
$$
T'' < T' + 1 \leq \alpha B' + 1 \leq \alpha B'' + \alpha + 1 \leq \alpha B'' + 2.
$$
Now suppose $\alpha = 1/m$ for some $m\in \bZ^+$ and suppose $B'' = p\cdot m + r$ for $r < m$.  Then, $\alpha B'' + \alpha = p + \frac{r+1}{m}$. If $r < m - 1$, then the largest integer smaller than $\alpha B'' + \alpha + 1$ is $p + 1 \leq \alpha B'' + 1$. If $r = m - 1$, then $\alpha B'' + \alpha + 1 = p + 2$, now since $T'' < p + 2$, it follows that $T'' \leq p + 1 \leq \alpha B'' + 1$.
\end{proof}
We can bound the cost of the solution, in terms of violating the representation constraint multiplicatively as follows.
\begin{corollary}
For any color $c$ and facility $i$,
$\frac{
\sum_{j\in D_c} x''_{ij}}{ \sum_{j\in D} x''_{ij}} \leq  2\alpha
$ for $1/\alpha\in \bZ^+$.
\end{corollary}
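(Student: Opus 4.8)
The plan is to leverage Lemma~\ref{lem:alpha_violation}, which already establishes that for $1/\alpha \in \bZ^+$ we have $\sum_{j \in D_c} x''_{ij} \leq \alpha \sum_{j \in D} x''_{ij} + 1$. Keeping the notation $T'' = \sum_{j \in D_c} x''_{ij}$ and $B'' = \sum_{j \in D} x''_{ij}$ from the proof of that lemma, this says $T'' \le \alpha B'' + 1$. To convert the additive $+1$ into the multiplicative bound $T''/B'' \le 2\alpha$, it suffices to show that every open facility serves enough clients, namely $B'' \ge 1/\alpha$: then $1 \le \alpha B''$, so that $\alpha B'' + 1 \le 2\alpha B''$, and dividing through by $B''$ yields the claim.

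The crux is therefore the lower bound $B'' \ge 1/\alpha$, which I would obtain by tracing the strengthened LP constraint $\sum_{j\in D} x_{ij} \ge \lceil 1/\alpha\rceil\, y_i$ through the two rounding steps. Since $(x', y') \in \cP(3\lambda', \alpha)$ (as shown earlier) and $y'_i = 1$ for each open facility $i \in F'$, this constraint gives $B' = \sum_{j\in D} x'_{ij} \ge \lceil 1/\alpha\rceil = 1/\alpha$ in the integral case. The flow network then imposes a lower bound of $\lfloor B' \rfloor$ on the arc $(i,t) \in A_4$, so any feasible integral flow satisfies $B'' \ge \lfloor B' \rfloor$. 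Because $1/\alpha$ is an integer and $B' \ge 1/\alpha$, we have $\lfloor B' \rfloor \ge 1/\alpha$, whence $B'' \ge 1/\alpha$ as needed.

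So the steps, in order, are: (i) invoke the $+1$ additive bound from Lemma~\ref{lem:alpha_violation}; (ii) establish $B' \ge 1/\alpha$ from the strengthened LP constraint applied to open facilities; (iii) push this lower bound through the $A_4$ arc to conclude $B'' \ge 1/\alpha$; and (iv) combine to obtain $\alpha B'' + 1 \le 2\alpha B''$. I do not anticipate any genuine obstacle, since all the real work was front-loaded into the LP strengthening and into Lemma~\ref{lem:alpha_violation}. The only point requiring a moment's care is that the floor appearing in the flow lower bound does not destroy the integer threshold $1/\alpha$, which is precisely where the hypothesis $1/\alpha \in \bZ^+$ is used.
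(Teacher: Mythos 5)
Your proof is correct and takes essentially the same route as the paper: both arguments absorb the $+1$ from Lemma~\ref{lem:alpha_violation} into $\alpha B''$ by deriving $B'' \geq \lfloor B' \rfloor \geq \lceil 1/\alpha \rceil = 1/\alpha$ from the strengthened LP constraint and the lower bound on the arc $(i,t)$. You merely spell out more explicitly than the paper why the floor does not drop below the integer threshold $1/\alpha$, which is a fair point of care rather than a divergence.
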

\begin{proof}
Since $B'' \geq\lfloor B' \rfloor \geq\lfloor \frac{1}{1/m} \rfloor = m$, the $+1$ term in the last line of the proof of Lemma \ref{lem:alpha_violation}, can be bounded by $\alpha B''$.
\end{proof}

\renewcommand{\dist}{\mathrm{dist}}
\newcommand{\diam}{\mathrm{diam}}
\newcommand{\CK}{\mathcal{K}}
\newcommand{\tCK}{\tilde{\mathcal{K}}}

\section{An Algorithm for $\alpha = 1/2$}
\label{sec:half}

In this section, we present a simple, combinatorial approximation algorithm for the important special case of $\alpha = 1/2$.  This case corresponds to finding a clustering of the points such that no color is the absolute majority in any cluster, i.e., every color in a cluster occurs at most half of the times as the cluster size.  To proceed, we need two notions, namely, \caplets and  threshold graphs.  

\para{\Caplets}
Let $G$ be any graph whose set of nodes is $D$.  A \emph{caplet} in $G$ is a subset $K \subseteq D, 2 \le |K| \leq 3$ with distinct colors, i.e., $c(j) \neq c(j')$ for $j \neq j' \in K$.  Since \caplets can have either size two or three, we call the former case an \emph{edge \caplet} and the latter a \emph{triangle \caplet}.  For two \caplets $K_1$ and $K_2$, let $\dist(K_1, K_2)$ be defined as the minimum distance between pair of points of the two \caplets, i.e., $\dist(K_1, K_2) = \min_{j_1\in K_1, j_2 \in K_2} d(j_1, j_2)$.   Note that the distance function defined on \caplets is not necessarily a metric but will be useful to bound the distance between points belonging to different \caplets.  The \emph{diameter} of a \caplet $K$ is $\diam(K) = \max_{j, j' \in K} d(j, j')$.  The diameter of a set $\CK$ of \caplets is $\diam(\CK) = \max_{K \in \CK} \diam(K)$. 

A \emph{\caplet decomposition} $\kappa(G)$ of a connected graph $G$, if it exists, is a set of edge \caplets and at most one triangle \caplet such that each node in $G$ is present in exactly one \caplet.  Note that the only time when a \caplet decomposition uses a triangle \caplet is when the number of nodes in $G$ is odd. The caplet decomposition can be found in polynomial time by guessing the triangle if the size of graph is odd, and then finding the perfect matching on the remaining vertices. 

\para{Threshold graph}
Given $D$, a threshold $\tau > 0$, we define a \emph{threshold graph} $G(\tau) = (D,E)$ to be an undirected graph on the points in $D$, where $(j, j') \in E$ iff they have different colors and they are at distance at most $\tau$ from each other, i.e., $c(j) \neq c(j')$ and $d(j,j') \le \tau$. 
\subsection{Algorithm}

First of all, we assume that we know the optimal value $\lambda^* = \lambda(\sigma^*)$.  This is without loss of generality since by definition of $k$-center, $\lambda^* \in \{  d(i,j) ~\mid~ i \in F, j \in D \}$.  Hence an algorithm can enumerate over the set of all possible values for $\lambda^*$; at worst, this enumeration only costs an additional factor $|F|\cdot|D|$ in the running time.\footnote{One can also get an $1+\epsilon$ approximation of the optimum $\lambda^*$ in logarithmic many tries with standard techniques.}  Assuming we know $\lambda^*$, the idea is to create the threshold graph with $2\lambda^*$ as the threshold, and then to decompose it into \caplets.  Finally, the \caplets can be clustered using the greedy algorithm for $k$-center.  The steps are presented in Algorithm~\ref{alg:alg_0.5}.  
\begin{algorithm}
\caption{Non-dominant-$k$-center($\cI = (D, F, d, k), \alpha = 1/2$).}\label{alg:alg_0.5}
\begin{algorithmic}[1]
\For {$\lambda \in \{d(i,j) ~\mid~ i\in F, j\in D\}$ in non-decreasing order}
\State $D'\gets \emptyset$
\For {Connected component $C$ of $G(2\lambda)$}
\State $G_C \gets (C, E')$ where $E' = \{ (j,j')\;|\; c(j)\neq c(j'),\; d(j,j') \leq 10\lambda \}$
\If {no \caplet decomp. for $G_C$}
\State reject $\lambda$ and continue to next $\lambda$
\EndIf
\State $D' \gets D' \cup \left \{j_K \mid \text{arbitrary client} \; j_K \in K \in  \kappa(G_C) \right \}$
\EndFor
\State $((F^g, \sigma^g), \lambda^g) \gets \text{Greedy-$k$-center}(\cI'= (D', F, d, k))$
\If {$\lambda^g > 2\lambda$}
\State reject $\lambda$ and continue to next $\lambda$
\EndIf
\State $\forall j: \sigma^s(j) \gets \sigma^g(j_K)$ where $j, j_K \in K$.
\State \Return $(F^g, \sigma^s)$
\EndFor
\end{algorithmic}
\end{algorithm}

Note that our approach is similar in spirit to the fairlet decomposition approach proposed in~\cite{chierichetti2017fair}.  However, since our representation constraint is less stringent than the fair clustering constraint, as we will see, the reasoning becomes more delicate and involved.   

To show that Algorithm~\ref{alg:alg_0.5} obtains a provably good approximation, we show a key characterization: there is a \caplet decomposition of each connected component of $G(2\lambda^*)$ with small diameter.
\begin{lemma}\label{claim_fairlet_decom}
For each connected component $C$ of $G(2\lambda^*)$, there is a \caplet decomposition $\kappa(C)$ such that $\diam(\kappa(C)) \leq 10\lambda^*$.
\end{lemma}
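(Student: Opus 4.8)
The plan is to first pin down the structure of a connected component $C$ of $G(2\lambda^*)$ in terms of the optimal clusters, and then to build the \caplet decomposition cluster by cluster, reconciling parities along a spanning tree. Because $\alpha=1/2$, no optimal cluster can be monochromatic (a single color would occupy all of a cluster, violating the representation bound), so every optimal cluster $\cC_i=(\sigma^*)^{-1}(i)$ has size at least two and satisfies $|\cC_i\cap D_c|\le\tfrac12|\cC_i|$ for every color $c$. Any two points of $\cC_i$ lie within $\lambda^*$ of the common center $i$, hence within $2\lambda^*$ of each other; if they share a color, the no-majority property guarantees a third point of a different color in $\cC_i$, linking them in $G(2\lambda^*)$. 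Thus each $\cC_i$ is connected in $G(2\lambda^*)$ and therefore lies entirely inside one component, so $C$ is a disjoint union $\cC_{i_1}\cup\cdots\cup\cC_{i_m}$ of whole optimal clusters. Summing the per-cluster bounds then shows $C$ itself is balanced: $|C\cap D_c|\le\tfrac12|C|$ for every $c$.

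Next I would record the distance bounds that keep the \caplets small. Within a cluster any two points are within $2\lambda^*$. Calling two clusters adjacent when a $G(2\lambda^*)$-edge joins them, for adjacent $\cC,\cC'$ any cross pair is within $2\lambda^*+2\lambda^*+2\lambda^*=6\lambda^*$ by the triangle inequality, so \emph{every} distinct-colored cross pair between adjacent clusters is already a valid edge \caplet. One more hop gives that any two points in clusters at distance two in the adjacency graph are within $10\lambda^*$. Since $C$ is connected its cluster-adjacency graph is connected, and I would fix a rooted spanning tree of it.

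The combinatorial engine is the folklore fact that a balanced multiset of even size partitions into distinct-color pairs, while a balanced multiset of odd size (which then necessarily has at least three colors) may have a point of \emph{any} prescribed present color removed so that the remainder matches perfectly into distinct-color pairs. Using this, every even-size cluster is decomposed internally into edge \caplets of diameter $\le2\lambda^*$, and the odd-size clusters become the only obstruction to a global \caplet decomposition, since such a decomposition admits at most one triangle. I would reconcile their parities along the spanning tree: processing from the leaves upward, each subtree is fully decomposed except for at most one leftover point handed to its parent; at a cluster the incoming leftovers are paired either with its own points (adjacent clusters, diameter $\le6\lambda^*$) or with one another (two children, at tree-distance two, diameter $\le10\lambda^*$), passing at most one point further up. The single surviving triangle, present exactly when $|C|$ is odd, is realized among points within tree-distance two, giving diameter $\le10\lambda^*$.

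The delicate part, and the step I expect to be the main obstacle, is the color bookkeeping during this reconciliation. Distance is never binding inside a local multiset, since all its points live within tree-distance two and hence within $10\lambda^*$; what I must instead guarantee is that each local multiset is matchable into distinct-color \caplets leaving at most one point, i.e.\ that collecting leftovers from several children does not concentrate a single color beyond half. This is precisely where the freedom granted by the combinatorial lemma, namely to choose for each odd cluster the color it contributes as a leftover, must be exploited together with the global balance of $C$ to keep every residual multiset balanced and to spend the unique triangle only once. Verifying this invariant is the crux, and the $10\lambda^*$ bound is exactly the price of occasionally matching two leftovers that sit two cluster-hops apart in the tree.
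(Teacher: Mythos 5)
Your first step — showing each optimal cluster is balanced, has at least two (and, if odd, at least three) colors, is connected in $G(2\lambda^*)$, and hence that $C$ is a disjoint union of whole optimal clusters — matches the paper, as do your distance calculations ($2\lambda^*$ within a cluster, $6\lambda^*$ across adjacent clusters, $10\lambda^*$ at tree-distance two). But the heart of your argument, the leftover-passing scheme along a spanning tree of the cluster-adjacency graph, has a genuine gap that you yourself flag as ``the crux'' without closing it: you never establish that the residual multiset at a node (its own points plus the leftovers handed up by its children) can always be matched into distinct-color pairs leaving at most one point, nor that the one surviving point can actually be absorbed into a nearby \caplet as a triangle. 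Global balance of $C$ does not localize: a small parent cluster with many odd children can receive leftovers that concentrate a single color, and even when the residual multiset is matchable, the final leftover's color may appear in \emph{every} edge \caplet within tree-distance two (e.g., a residual multiset of three $A$'s and four $B$'s matches only as three $(A,B)$ pairs plus a spare $B$, which then has no local $B$-free pair to join), forcing the spare point to travel farther and breaking the $10\lambda^*$ bound. Exploiting the freedom to choose each odd cluster's leftover color to prevent this would require a Hall-type or flow argument that you have not supplied.

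The paper avoids this bookkeeping entirely by never passing single points between clusters. It keeps each cluster's internal decomposition intact — at most one triangle \caplet per cluster — so the only defect in $\bigcup_i \CK_i$ is having several triangles where a \caplet decomposition permits one. It then builds a graph on the \caplets themselves (adjacent if within $2\lambda^*$), takes a minimal tree spanning the triangle \caplets, and extracts disjoint paths each joining \emph{two} triangles through edge \caplets. Along each such path a single ``shift'' converts both endpoint triangles and all intermediate edges into edge \caplets: a point leaves the first triangle, pairs with a differently-colored point of the next \caplet, displaces that \caplet's other point onward, and the last triangle absorbs the arrival by shedding one point. The only combinatorial fact needed is that every \caplet contains two points of distinct colors, so a valid partner for the shifted point always exists locally — no global color invariant, no accumulation of leftovers. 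Each new \caplet spans at most two consecutive path \caplets, giving diameter $2\lambda^*+6\lambda^*+2\lambda^*=10\lambda^*$. If you want to rescue your tree-based scheme you would need to prove the matchability invariant for residual multisets; as written, the proposal does not constitute a proof.
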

Before proving the lemma, we use it to show that Algorithm~\ref{alg:alg_0.5} gives a good approximation.
\begin{theorem}
Algorithm \ref{alg:alg_0.5} finds a $(1/2)$-capped $k$-clustering solution of cost at most $12\lambda^*$.
\end{theorem}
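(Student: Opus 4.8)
The plan is to track the value $\lambda'$ at which Algorithm~\ref{alg:alg_0.5} first returns and to establish three facts: (i) the loop does not run past $\lambda^*$, so $\lambda' \le \lambda^*$; (ii) the returned assignment $\sigma^s$ has cost at most $12\lambda'$; and (iii) $\sigma^s$ is $(1/2)$-capped. Since the loop scans the candidate values $\{d(i,j)\}$ in non-decreasing order and $\lambda^*$ is itself such a value, it suffices to show that the guess $\lambda=\lambda^*$ survives both rejection tests; then (ii) combined with $\lambda'\le\lambda^*$ gives cost at most $12\lambda^*$, and (iii) is independent of $\lambda'$.

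To establish (i) I would verify the two tests at $\lambda=\lambda^*$. For the \caplet-decomposition test, Lemma~\ref{claim_fairlet_decom} supplies, for every connected component $C$ of $G(2\lambda^*)$, a \caplet decomposition whose \caplets have diameter at most $10\lambda^*$; because the algorithm builds $G_C$ by joining every pair of differently colored points at distance at most $10\lambda^*$, each such \caplet is a clique of $G_C$, so this decomposition is admissible and the test does not fire. For the cost test, I would use that the representative set $D'$ is a subset of $D$ and hence is covered by the optimal centers $F^*$ (with $|F^*|\le k$) at radius $\lambda^*$; thus the unconstrained $k$-center optimum on the instance $\cI'=(D',F,d,k)$ is at most $\lambda^*$, and the $2$-approximation guarantee of Greedy-$k$-center (Algorithm~\ref{alg:greedy}) yields $\lambda^g\le 2\lambda^*=2\lambda$, so the check $\lambda^g>2\lambda$ also fails. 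Hence the algorithm returns at some $\lambda'\le\lambda^*$.

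For (ii), fix the returned $\lambda'$ and recall that each client $j$ is sent to $\sigma^g(j_K)$, where $K$ is the \caplet containing $j$ and $j_K$ is its representative. Since $K$ is a clique of $G_C$ at threshold $10\lambda'$, we have $d(j,j_K)\le 10\lambda'$, and the accepted greedy solution satisfies $d(j_K,\sigma^g(j_K))\le\lambda^g\le 2\lambda'$; the triangle inequality then gives $d(j,\sigma^s(j))\le 12\lambda'\le 12\lambda^*$. For (iii), I would observe that a cluster $(\sigma^s)^{-1}(i)$ is exactly the union of those \caplets whose representatives Greedy-$k$-center assigns to $i$. If $m$ \caplets fall into this cluster, its size is at least $2m$ because every \caplet has at least two points, whereas any fixed color appears at most once in each \caplet (the colors within a \caplet are distinct) and therefore at most $m$ times overall; consequently $|(\sigma^s)^{-1}(i)\cap D_c|\le m\le \tfrac{1}{2}|(\sigma^s)^{-1}(i)|$, which is precisely the $\alpha=1/2$ representation constraint.

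The main obstacle is step (i), and specifically the coupling of the two tests: one must check that the small-diameter decomposition guaranteed by Lemma~\ref{claim_fairlet_decom} really is a clique decomposition of the algorithm's graph $G_C$ (so that the combinatorial \caplet-finding routine can recover \emph{some} valid decomposition, not necessarily the lemma's), and that the greedy guarantee is applied on the reduced instance over $D'$ rather than over $D$. Once the loop is known to stop at $\lambda'\le\lambda^*$, steps (ii) and (iii) are short and non-interacting---a single triangle-inequality bound and a counting argument---so the only remaining care is bookkeeping the constants, verifying that the $10\lambda'$ radius of the \caplets and the $2\lambda'$ from the greedy step add to exactly $12\lambda'$.
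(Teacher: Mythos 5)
Your proof is correct and follows essentially the same route as the paper: use Lemma~\ref{claim_fairlet_decom} to show both rejection tests fail at $\lambda=\lambda^*$ (so the loop returns at some $\lambda'\le\lambda^*$), bound the cost by $10\lambda'+2\lambda'$ via the triangle inequality, and get the $(1/2)$-capped property from the fact that each cluster is a union of \caplets of size at least two with distinct colors. Your write-up is in fact slightly more careful than the paper's on two points it glosses over---that the lemma's decomposition is admissible in $G_C$, and that the greedy $2$-approximation is applied to the reduced instance on $D'\subseteq D$.
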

\begin{proof}
Using Lemma~\ref{claim_fairlet_decom}, we know that the {\bf if} statement (line 3 in Algorithm~\ref{alg:alg_0.5}) fails for $\lambda^*$.  Furthermore, since the optimal \capped clustering yields a feasible solution for the $k$-center instance $I$ and there is a 2-approximation algorithm for $k$-center, a feasible solution can be found for $\lambda = \lambda^*$ (line 7).  Therefore the loop terminates successfully (line 8) for some $\lambda \leq \lambda^*$. 

We next show we get a valid $(1/2)$-capped clustering.  For each color $c$, note that the number of points of color $c$ assigned to facility $i \in F$ is at most the number of \caplets assigned to $i$.  However, by definition, each \caplet is of size at least two and has distinct colors.  Therefore, no color can be the absolute majority for each $i \in F$; this proves the $(1/2)$-capped property.  The cost of clustering is a $12$-approximation since each point $j$ in a \caplet $K$ assigned to a facility $i$ is at most at distance $2\lambda$ from $j_K$ and $d(j_K, j) \leq 10\lambda$ since $\diam(K) \leq 10 \lambda$ by Lemma~\ref{claim_fairlet_decom}.  The proof is complete as $\lambda \leq \lambda^*$.  
\end{proof}

\subsection{Analysis}

We now prove Lemma~\ref{claim_fairlet_decom}. Let $C$ be a connected component of $G(2\lambda^*)$.  There are two steps in the proof.  In the first step, we find a set $\CK_i$ of \caplets with respect to each facility $i$ such that $\diam(\kappa(\CK_i)) \leq 2\lambda^*$.  In the second step, we collect the \caplets $\kappa(\CK_i)$ for each $i \in F$ from the first step and appropriately modify them to obtain a \caplet decomposition $\kappa(\CK)$ of $C$ such that $\diam(\kappa(\CK)) \leq 10\lambda^*$.  (If we naively take the union of the \caplets for $i \in F$ we may not get a valid \caplet decomposition of $C$ since we might have more than one triangle \caplet, violating the definition.)

The first step is relatively straightforward.   Indeed, consider the optimal solution with open facilities $F^*$ and an assignment $\sigma: D\rightarrow F^*$.  Since for each open facility $i\in F^*$, the number of points with the same color is less than half of the points assigned to $i$, if $|\sigma^{-1}(i)|$ has even size, we can define a matching between points of different colors in $\sigma^{-1}(i)$.  If $|\sigma^{-1}(i)|$ is odd, then there are at least three colors present in $\sigma^{-1}(i)$. Define the triangle to include three points of  different colors and the rest of points in $\sigma^{-1}(i)$ can be matched to points of different colors.  This yields $\CK_i$ with the property that it has at most one triangle \caplet.  Furthermore that since all the points in $\sigma^{-1}(i)$ are at distance at most $\lambda^*$ from $i$, by the triangle inequality, any two points in $\sigma^{-1}(i)$ are at distance at most $2\lambda^*$ from each other.  Therefore, these points will belong to the same connected component of $G(2\lambda^*)$.  Let $\tCK_C = \cup_{i \in F^*} \CK_i  \cap C$.  

Next, we consider the second step.  For this, it is helpful to work with the graph $G' = (\tCK_C, E)$ such that for $K, K' \in \tCK_C$, we have $(K, K') \in E$ if $\dist(K, K') \leq 2\lambda^*$. Notice that $G'$ is connected since it is constructed from $C$. 

The goal is to transform the \caplets obtained in the first step into a valid \caplet decomposition of $C$.  This is done by finding a path between two triangle \caplets and ``shifting'' points to get a new set of edge \caplets, sacrificing some in the distance between \caplets.  Fix $C$ henceforth.

From $C$, we construct a set $P$ of disjoint paths with the following properties: each path in $P$ is of the form $K_0, \ldots, K_\ell$ where
(i) $K_0$ and $K_\ell$ are triangle \caplets and $K_i, 1 \leq i < \ell$ are edge \caplets, (ii) $\dist(K_i, K_{i+1}) \leq 6 \lambda^*$, and (iii) $\diam(K_i) \leq 2\lambda^*$.  Let $T$ be a minimal rooted tree spanning the nodes corresponding to triangle \caplets in $C$.   Note that all the leaves in $T$ correspond to triangle \caplets and the internal nodes in $T$ may be edge or triangle \caplets.  We perform a bottom-up procedure on $T$, removing paths from $T$ and adding them $P$ in an iterative manner; the procedure ends when $T$ has at most one triangle \caplet.  Let $T_f$ denote the rooted subtree of $T$ rooted at a node $f$. In the bottom-up procedure, we maintain the property that for each scanned node $f$ there is at most one triangle \caplet in $T_f$.  Note this property is already satisfied at the leaves. Let $K$ be the deepest node in the current tree that does not satisfy this property. If $K$ has more than one child, let $p_1 = (K, K_1, \ldots, K_r)$ and $p_2 = (K, K'_1, \ldots, K'_s)$ be two paths starting at $K$ and ending at triangle \caplets $K_r$ and $K_s$. Note that the degree of internal nodes on $p_1$ and $p_2$ is exactly two by the choice of $K$.  We add the path $p = (K_r, K_{r-1}, \ldots, K_1, K'_1, K'_2, \ldots, K'_s)$ to $P$ and remove the edges of $p_1 \cup p_2$ from $T$.  Since $K_1$ and $K'_1$ are at distance $2\lambda^*$ from $K$ and points inside $K$ are at distance at most $2\lambda^*$ from each other, $K_1$ and $K'_1$ are at distance at most $6\lambda^*$ from each other.  We continue this procedure until $K$ has at most one child.  If $K$ is a leaf, then we remove if it is an edge \caplet and leave it in $T$ if otherwise.  Else, let $K'$ be the sole child of $K$.  If $K'$ is an edge \caplet, we remove $K'$ from $T$.  If both $K$ and $K'$ are triangle \caplets, we add them to $P$ and remove both of them from $T$.  We continue the procedure until we reach the root and at the end of this, there exists at most one triangle \caplet that is not covered by a path in $P$.  It is also easy to see that each path in $P$ satisfies the desired properties.  (See Figure~\ref{fig:path}.)
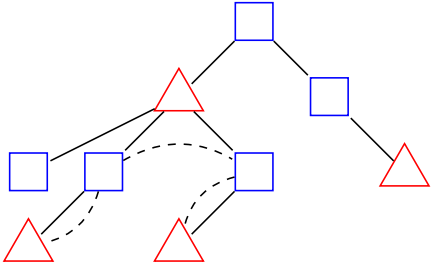
\begin{figure}
\centering
\begin{tikzpicture}[-,>=stealth',shorten >=1pt,auto,node distance=1.0cm, scale=0.20, semithick]
\node[draw, rectangle, minimum size=0.5cm] (a) [color=blue] {};
\node[draw, rectangle, minimum size=0.5cm] (b) [color=blue, below of = a, right of = a] {};
\node[draw, regular polygon,regular polygon sides=3, minimum size=0.75cm] (c) [color=red, below of = a, left of = a] {};
\node[draw, rectangle, minimum size=0.5cm] (d) [color=blue, below of = c, right of = c] {};
\node[draw, rectangle, minimum size=0.5cm] (e) [color=blue, below of = c, left of = c] {};
\node[draw, rectangle, minimum size=0.5cm] (e0) [color=blue, left of = e] {};
\node[draw, regular polygon,regular polygon sides=3, minimum size=0.75cm] (f) [color=red, below of = c, right of = e] {};
\node[draw, regular polygon,regular polygon sides=3, minimum size=0.75cm] (g) [color=red, below of = e, left of = e] {};
\node[draw, regular polygon,regular polygon sides=3, minimum size=0.75cm] (h) [color=red, below of = b, right of = b] {};

\path[]
  (a) edge node {} (b)
  (a) edge node {} (c)
  (c) edge node {} (d)
  (c) edge node {} (e)
  (c) edge node {} (e0)
  (d) edge node {} (f)
  (e) edge node {} (g)
  (h) edge node {} (b)
  (e) edge [dashed, bend left, above] node {} (g)
  (e) edge [dashed, bend left, below] node {} (d)
  (d) edge [dashed, bend right, below] node {} (f)
  ;
\end{tikzpicture}
\caption{Construction of a path (dashed line) in $P$.  The triangle nodes are triangle \caplets and the square nodes are edge \caplets.}
\label{fig:path}
\end{figure}

Now consider each $p = (K_0, K_1, \ldots, K_\ell) \in P$.  Recall from property (i) above that $K_0$ and $K_\ell$ are triangle \caplets and the rest are edge \caplets.  We define a new set of edge \caplets $K'_0, \ldots, K'_{\ell+1}$ as follows.  We pick an arbitrary point $i_0$ from $K_0$ and shift it to the next \caplet $K_1$ and then shift some point from $K_1$ to the next \caplet, and so on. More precisely, let $i_0$ be an arbitrary point in $K_0$, define $K'_0 = K_0 \setminus \{ i_0 \}$ and let $K'_1 = \{ i_0, i'_1\}$ where $i'_1$ is point in $K_1$ with different color than $i_0$.  We continue the process iteratively, where at each step $r$, we define the edge \caplet $K '_{r+1}$ to contain point $i_r$ in $K_r$ not covered by $K'_0, K'_1, \ldots, K'_{r}$ for ($r < \ell$), and point $i'_{r+1}$ in $K_{r+1}$ with different color than $i_r$. In the last step, a point $i'_{\ell-1} \in K_{\ell-1}$ is shifted and matched to a point $i_\ell \in K_\ell$ and we define $K'_{\ell+1} = K_\ell \setminus \{ i_\ell \}$.  Note this process is possible since each \caplet $K_r$ contains at least two points of different colors, there always exists a point that has a different color than the shifted point.  (See Figure~\ref{fig:shift}.)  By properties (ii) and (iii) above, the diameter of each \caplet is at most $2\lambda^*$ and two consecutive \caplets are at distance at most $6\lambda^*$ from each other.  Applying the triangle inequality, we get that the diameter of the \caplets in $K'_0, \ldots, K'_{\ell+1}$ is at most $10 \lambda^*$.  

\begin{figure}[htb]
\centering
\begin{tikzpicture}[-,>=stealth',shorten >=1pt,auto,node distance=1.3cm, , scale=0.20, semithick]
\node[state,color=red] (a) {};
\node[state,color=blue] (b) [right of = a] {};
\node[state,color=green] (c) [above of = b] {};
\node[state,color=blue] (d) [right of = b] {};
\node[state,color=green] (e) [above of = d] {};
\node[state,color=green] (f) [right of = d] {};
\node[state,color=red] (g) [above of = f] {};
\node[state,color=red] (h) [right of = f] {};
\node[state,color=blue] (i) [above of = h] {};
\node[state,color=green] (j) [right of = h] {};

\path[]
  (a) edge node {} (b)
  (b) edge node {} (c)
  (a) edge node {} (c)
  (d) edge node {} (e)
  (f) edge node {} (g)
  (h) edge node {} (i)
  (h) edge node {} (j)
  (i) edge node {} (j)

  (a) edge [bend left, above, dashed] node {} (c)
  (b) edge [dashed] node {} (e)
  (d) edge [dashed] node {} (f)
  (g) edge [dashed] node {} (i)
  (h) edge [bend left, below, dashed] node {} (j)
  ;
\end{tikzpicture}
\caption{The shifting operation in action on a path of four \caplets, beginning and ending with a triangle \caplet.  The solid lines denote the original \caplets and the dotted lines denote the new \caplets after the shifting operation.
\label{fig:shift}
}
\end{figure}
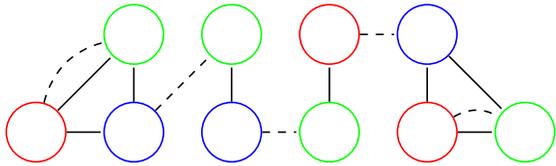

\newcommand{\reuters}{\texttt{reuters}\xspace}
\newcommand{\victorian}{\texttt{victorian}\xspace}
\newcommand{\fourarea}{\texttt{4area}\xspace}
\newcommand{\query}{\texttt{query}\xspace}

\section{Empirical evaluation}
\label{sec:experiments}

In this section we empirically evaluate our algorithms on several publicly-available datasets from the UCI Repository%
\footnote{\url{http://archive.ics.uci.edu/ml}}%
and DBLP%
\footnote{\url{http://dblp.uni-trier.de/xml/}}%
, as well as on a proprietary dataset related to online auctions.  In our empirical analysis we focus on the LP-based algorithm (Section~\ref{sec:general}).  We describe the datasets used, the baselines we consider, the quality measures we compute, and finally the results. 

\subsection{Datasets}
The datasets reported in Table~\ref{table:dataset} come from different domains and represent Euclidean spaces with dimensions ranging from $8$ to $20$ as well as a wide range of colors (between 4 and $>12{,}000$). The datasets report different levels of balance of color distribution, from complete balance (each color is equally represented in the whole dataset) to high imbalance ($>40\%$ of points of one color).

\begin{table}
\centering
\begin{tabular}{ r | c c c c }
\textbf{Dataset} & \textbf{\# Points} & \textbf{\# Dim.} & \textbf{\# Colors} & \textbf{Max ratio}\\
\hline
\fourarea & $25{,}853$ & 8 &4 &40.2 \\
\query & $>29{,}000$ & 20 & $>12{,}000$ & $<7.0\%$ \\
\reuters &  2500 &10 & 50 & $2.0\%$ \\
\victorian & 4500 & 10 & 45 & $2.2\%$ \\
\hline
\end{tabular}
\caption{Datasets used. Column \# Dim. reports the number of dimensions of the space used and column max ratio represents the maximum ratio of a color in the dataset.} 
	\label{table:dataset}
\end{table}

We now describe more in detail the datasets used.
We obtained two datasets (\reuters, \victorian) from text embeddings of multi-author datasets, one from a co-authorship graph embedding (\fourarea), and one from online auctions (\query). All datasets represent points in the Euclidean space and we always use the $\ell_2$ distance.

\para{(i) \reuters\footnote{Available at \url{archive.ics.uci.edu/ml/datasets/Reuter_50_50}}} It contains 50 English language texts from each of $50$ authors (for a total of $2{,}500$ texts). We transformed each text into a 10-dimensional vector using Gensim's Doc2Vec with standard parameter settings. Here, the colors represent the author of the text. We observe that clustering doc2vec embeddings has been used extensively in language analysis (see, e.g.,~\cite{cha2017language}).

\para{(ii) \victorian\footnote{Available at \url{archive.ics.uci.edu/ml/datasets/Victorian+Era+Authorship+Attribution}}} It consists of texts from 45 English language authors from the Victorian era. Each text consists of $1{,}000$-word sequences obtained from a book of the author (we use the training dataset). The data has been extracted and processed in~\cite{gungor2018fifty}. From each document, we extract a 10-dimensional vector using again Gensim's doc2vec with standard parameter settings and we use the author as color. We use 100 texts from each author.

\para{(iii) \fourarea~\footnotemark[3]} It contains $25{,}853$ points in $8$ dimensions representing each a researcher in one of four areas of CS: data mining, machine learning, databases, and information retrieval. The color is the main area of research of the author. The points are obtained by using the graph embedding method DeepWalk~\cite{deepwalk} on the undirected co-authorship graph of \fourarea, using default settings.

\para{(iv) \query}  It is a representative subset of an anonymized proprietary dataset. Each point in this dataset represents a bag of queries in an online auction environment. The points have $20$ dimensions and are obtained with a proprietary embedding method that encodes semantic similarity. The color of the point is the anonymous id of the main advertiser of the submarket represented by the bag.

\subsection{Experimental setup}

\subsubsection{Baselines}

We use the following two baselines. 

\para{(i) Greedy} Because the $k$-center problem is NP-hard, even without the additional constraint of being $\alpha$-\capped, we use the well-known $k$-center greedy method, which ignores the  representation constraint, as a gold standard. Notice that this algorithm returns a $2$-approximation of the cost of the optimum (without representation constraint) which is always lower than the optimum cost of our problem. To further strengthen the baseline, we post-process the output apply a round of the standard Lloyd iterative algorithm, with $k$-center cost. This step can only improve the results.  We use this method as a gold standard baseline to evaluate the increased cost incurred by our algorithm to enforce the representation constraint and we measure how much our algorithm improves the representation constraint bound of the clusters.

\para{(ii) Random} We also compare against the  baseline of sampling $k$ random points as centers and assigning all points to the nearest center selected. Because this method depends on randomness (while all other algorithms are deterministic), we rerun the algorithm ten times and report the average results. Notice that this algorithm as well does not (necessarily) respect the \capped constraints.

\subsubsection{Measures of quality}

We evaluate the following measures of quality for a clustering.

\para{Cost} We measure the maximum distance of a point to the nearest center in the solution. In particular, we compare the cost of the solution output by our $\alpha$-\capped $k$-clustering algorithm, (for a certain $\alpha$), and the solution of the baselines for the same $k$.

\para{Additive violation of representation constraint}
Recall that our algorithm in Section~\ref{sec:general} can output a solution mildly violating the representation constraint.  We wish to study how big is this violation in practice.  To this end, let $C$ be a cluster in the solution output of an $\alpha$-\capped clustering instance. The maximum allowed number of points of a certain color in the cluster $C$ is $\lfloor |C|\alpha\rfloor$. We let $\Delta = \max_{C,c} \max(|C \cap D_c| - \lfloor |C|\alpha\rfloor, 0)$ be the maximum additive violation of the $\alpha$-\capped constraint, over any cluster $C$ and any color $c$. Our algorithm, provably, has an additive violation $\Delta$ of at most $2$ point. We also evaluate the additive violation of the output of the greedy algorithm and random.

\subsubsection{Implementation details and parameters of the algorithm}

We now describe the main parameters of the algorithm in Section~\ref{sec:general}. The algorithm takes in input $k$, $\alpha$, representing the number of centers allowed and parameter of the $\alpha$-\capped constraint. 
To find a small $\lambda$ for which the polytope $P(\lambda,\alpha)$ gives a feasible solution, instead of binary search, we use following method. We obtain a lower bound on the cost  the clustering by running the greedy $k$-center algorithm and using $\frac{\lambda'}{2}$ as a lower bound, where $\lambda'$ is the cost of the solution found (this is provably a lower bound of the cost for our problem). We also bound the maximum distance of two points by $\lambda''$ (e.g., by using $2$ times the maximum distance of a fixed point to any other point) and iterate over a grid $\Lambda$ that is exponentially increasing by a $(1+\epsilon)$ multiplicative factor between these two extremes, 
$$
\Lambda = \left\{ \frac{\lambda'}{2}, \frac{\lambda'}{2}(1+\epsilon), \frac{\lambda'}{2}(1+\epsilon)^2, \ldots, \lambda'' \right\},
$$
to find the smallest feasible $\lambda$. Notice that a solution is found unless the problem is infeasible (i.e., $\alpha$ is lower than the maximum fraction of points of a color). This allows us to check the LP feasibility with lower $\lambda$'s first, which is better since checking feasibility becomes computationally more expensive as $\lambda$ increases.  

\begin{algorithm}[ht]
\caption{FasterAlgorithm($\cI =(D, F, d, k), \epsilon, m$).}\label{alg:exp}
\begin{algorithmic}[1]
\State $\lambda'' \gets \max_{j \in D} d(i_0,j)$ for arbitrary $i_0 \in D$
\State $((F',\sigma'), \lambda') \gets \textrm{Greedy-$k$-center}(\cI =(D, F, d, k))$
\State $((F^c, \sigma^c), \lambda^c) \gets \textrm{Greedy-$k$-center}(\cI^c =(D, F, d, m * k))$
\For {$\lambda \in \{\frac{\lambda'}{2}, \frac{\lambda'}{2}(1+\epsilon), \frac{\lambda'}{2}(1+\epsilon)^2, \ldots, 2*\lambda''\}$}
\State $(F^s,\sigma^s) \gets \textrm{Fair-$k$-center}(\cI' = (D, F^c, d, k), \lambda)$.
\If{$(F^s,\sigma^s)$ is non-empty}
\Return $(F^s,\sigma^s)$
\EndIf
\EndFor
\end{algorithmic}
\end{algorithm}

Finally, to speed-up the computation, we restrict the variables $y_i$,$x_{ij}$ that we create to be non-zero only for $i \in F' \subseteq F$ where $F'$ is a core-set of the dataset, obtained by running the greedy algorithm to select $m \times k$ facilities. Notice that using $m\ge 1$ results, provably, in a constant factor approximation algorithm. We evaluate the effect of $\epsilon = 0.1, 0.5$, and experiment with $m\ge 2$.

All our computations are run, independently, each on a single machine, from a proprietary Cloud, using Google's Linear Optimization Package (GLOP) as our LP solver, and a maximum flow solver in C++. Both packages are available in Google's OR tools.%
\footnote{\url{https://developers.google.com/optimization/}}

\subsection{Experimental results}

\begin{table}
{\footnotesize
\begin{tabular}{r|cccccc}
\toprule
Dataset & $\alpha$ &  \makecell{Cost vs \\  Greedy} & \makecell{Cost vs \\ Random} &  $\Delta$ &  $\Delta_{\mathrm{G}}$ & $\Delta_{\mathrm{Rand}}$ \\
\midrule
\fourarea & 0.45 &             +62\% &             +50\% &                          1 &                         32 &                            660 \\
      & 0.50 &             +67\% &             +55\% &                          1 &                         19 &                            552 \\
      & 0.60 &             +62\% &             +50\% &                          1 &                          6 &                            338 \\
      & 0.70 &             +64\% &             +52\% &                          0 &                          2 &                            124 \\
      & 0.80 &             +64\% &             +52\% &                          0 &                          0 &                              0 \\
\midrule
\query & 0.07 &              +6\% &              +7\% &                     1 &                        132 &                           66 \\
         & 0.08 &              +6\% &              +7\% &                     1 &                          9 &                           46 \\
         & 0.09 &              +6\% &              +7\% &                     0 &                          7 &                           26 \\
         & 0.10 &              +6\% &              +7\% &                     0 &                          4 &                            6 \\

\midrule
\reuters & 0.02 &             +80\% &             +44\% &                          1 &                         35 &                             38 \\
      & 0.05 &             +75\% &             +40\% &                          1 &                         29 &                             35 \\
      & 0.10 &             +53\% &             +22\% &                          1 &                         24 &                             29 \\
      & 0.20 &              +7\% &             -15\% &                          1 &                         17 &                             18 \\
      & 0.30 &              -3\% &             -23\% &                          1 &                         15 &                             10 \\
      & 0.40 &             +31\% &              +4\% &                          0 &                         12 &                              8 \\
      & 0.50 &              -3\% &             -23\% &                          0 &                          9 &                              6 \\
\midrule
\victorian & 0.05 &            +109\% &             +26\% &                          1 &                         62 &                             57 \\
      & 0.10 &             +45\% &             -13\% &                          1 &                         56 &                             38 \\
      & 0.20 &             +39\% &             -17\% &                          1 &                         43 &                              9 \\
      & 0.30 &             +63\% &              -2\% &                          1 &                         30 &                              0 \\
      & 0.40 &             +45\% &             -13\% &                          1 &                         17 &                              0 \\
      & 0.50 &             +45\% &             -13\% &                          0 &                         10 &                              0 \\
\bottomrule
\end{tabular}
\caption{Comparison of the cost and maximum additive violation of representation constraint for our algorithm, as well as the baselines, over various datasets and $\alpha$ factors, for  $k=25$, $\epsilon =0.1$, $m=2$. We report the ratio of the cost of our algorithm's solution with respect to both the greedy algorithm (Cost vs Greedy) and the random baseline  (Cost vs Random); the maximum additive violation for our algorithm ($\Delta$), the maximum additive violation of the greedy algorithm ($\Delta_{\mathrm{G}}$), and of the random baseline ($\Delta_{\mathrm{Rand}}$).      \label{table:overview}}
}
\end{table}

\paragraph{Comparison with the baselines.}
In Table~\ref{table:overview}, we report, for various $\alpha$ factors, a comparison of the quality of the output of our algorithm with that of the baselines. In this table, we fix the parameters: $k=25$, $\epsilon=0.1$, $m=2$ and show results for all datasets and representative $\alpha$'s that are close to the maximum color ratio of a color in each dataset (there is no feasible solution for $\alpha$'s lower than this ratio).

First, we evaluate the ratio of the cost (i.e., the maximum distance of a point to its center) of the solution obtained by our algorithm to that of the greedy algorithm. Notice that in all datasets our algorithm reports a cost that is relatively close to the unconstrained greedy algorithm and is usually between $+10\%$ worse and up to $2$x worse. Interestingly, despite the fact that the unconstrained problem can have a much better optimum cost, we can sometimes obtain costs that are at most $10$--$50\%$ larger than of the unconstrained solution (which in turn is lower than the actual optimum value for our problem). This result is better than that predicted by the worst-case theoretical analysis (where we show a $3$x factor). This improvement occurs even for $\alpha$ very close to the strongest possible representation constraint for which there is a solution. 

In Table~\ref{table:overview}, we also evaluate the maximum additive violation of the color cap constraint for our algorithms as well as the baselines. As proved formally, the maximum additive violation for our algorithm ($\Delta$) is at most $2$ for general $\alpha$'s (and $1$ for the case of integer $1/\alpha$). We observe interestingly that it is always $1$ in our experiments. Note instead that the baselines, which do not take into account the constraint, can incur very large additive violations of up to hundreds of points. This result confirms the importance of using algorithms specifically designed for this problem.

\paragraph{Effect of the parameters}

We now study more in detail the effect of the main parameters $k, \alpha, \epsilon, m$ on the quality of the clustering.

\begin{figure}
\begin{center}
\begin{tabular}{cc}
\subfigure[Setting $\epsilon = 0.1$]{\includegraphics[width=0.23\textwidth,keepaspectratio]{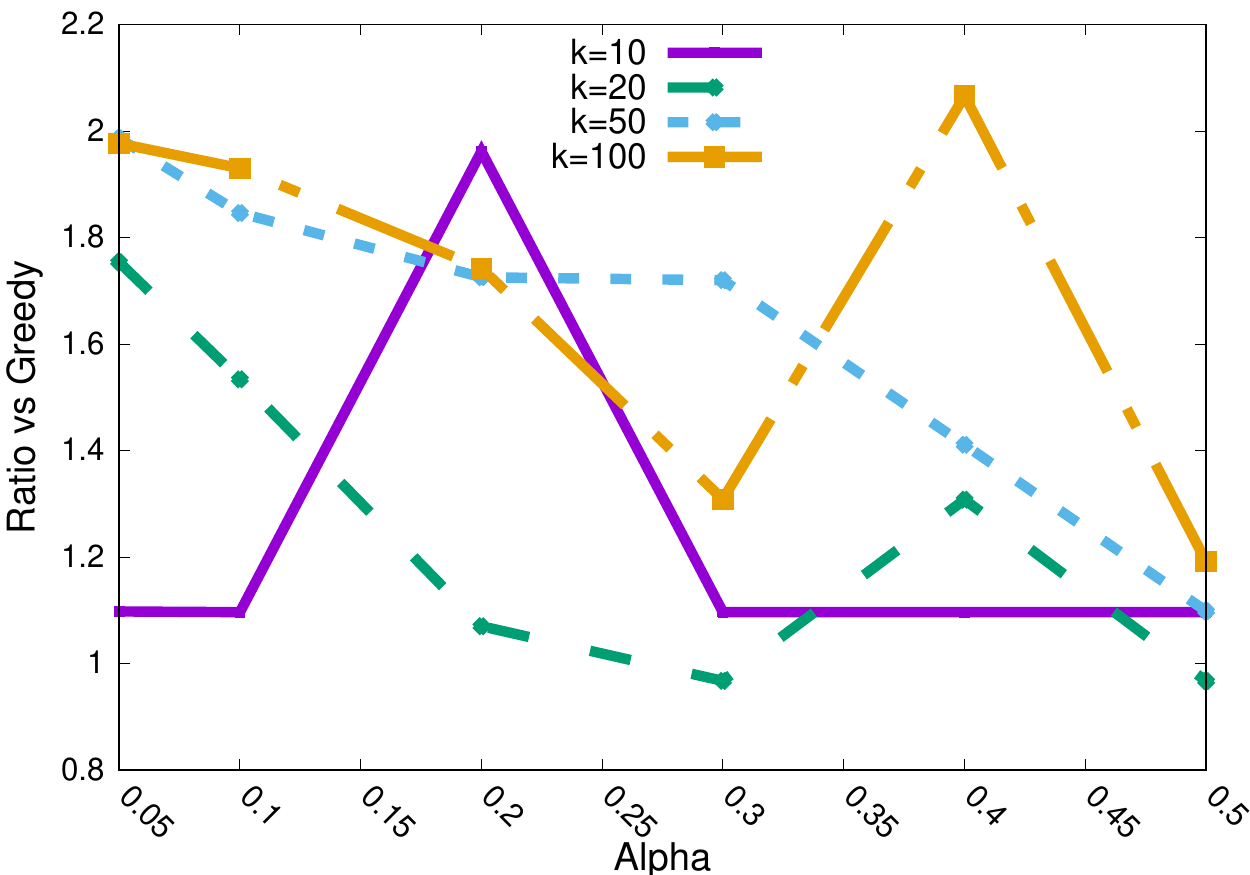}\label{fig:cost-reuters-01}}
\subfigure[Setting $\epsilon = 0.5$]{\includegraphics[width=0.23\textwidth,keepaspectratio]{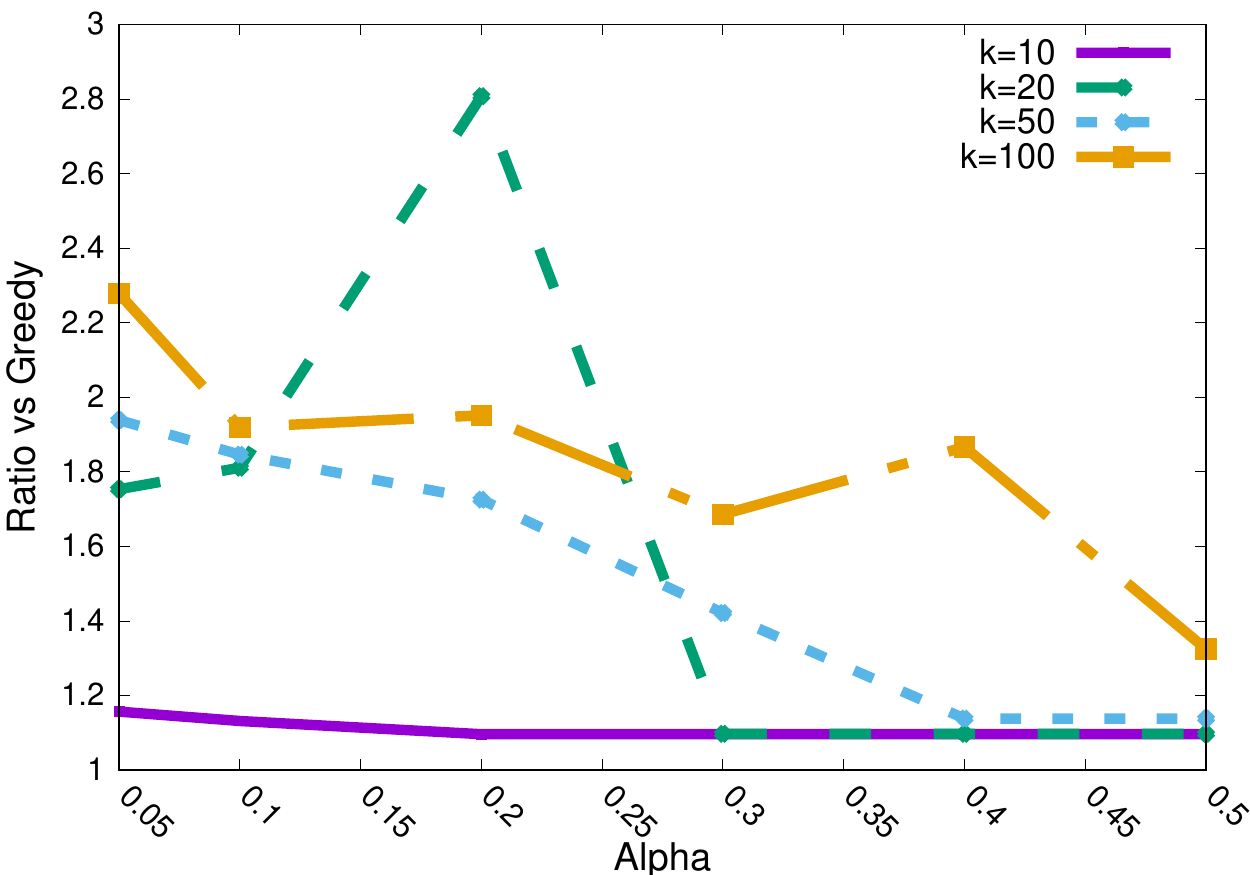}\label{fig:cost-reuters-05}}
\end{tabular}
\caption{Cost of the solution vs Greedy baseline for various $\alpha$, $k$ over the \reuters dataset, using $\epsilon = 0.1,0.5$, $m=2$.}
\label{fig:cost-reuters}
\end{center}
\end{figure}

Figures~\ref{fig:cost-reuters-01} and~\ref{fig:cost-reuters-05} show the ratio of the cost of the solution over the cost of the greedy baseline, for various $\alpha$ ranges, and distinct $k$'s, in the \reuters dataset. Here, we compare the setting $\epsilon = 0.1$ (Figure~\ref{fig:cost-reuters-01}) and $\epsilon = 0.5$ (Figure~\ref{fig:cost-reuters-05}). Notice how the approximation ratio (over greedy) is always $\lessapprox 2$ for the $\epsilon = 0.1$ case and $\lessapprox 3$ for $\epsilon = 0.5$ case. As is expected, notice that larger $\alpha$'s are associated with lower cost ratios (it is easier to find a low cost solution with higher $\alpha$). Finally, despite the pattern being less strong, we observe generally larger ratios for larger $k$'s. 

\begin{figure}
\centering
\begin{tabular}{cc}
\subfigure[]{
\includegraphics[width=0.23\textwidth,keepaspectratio]{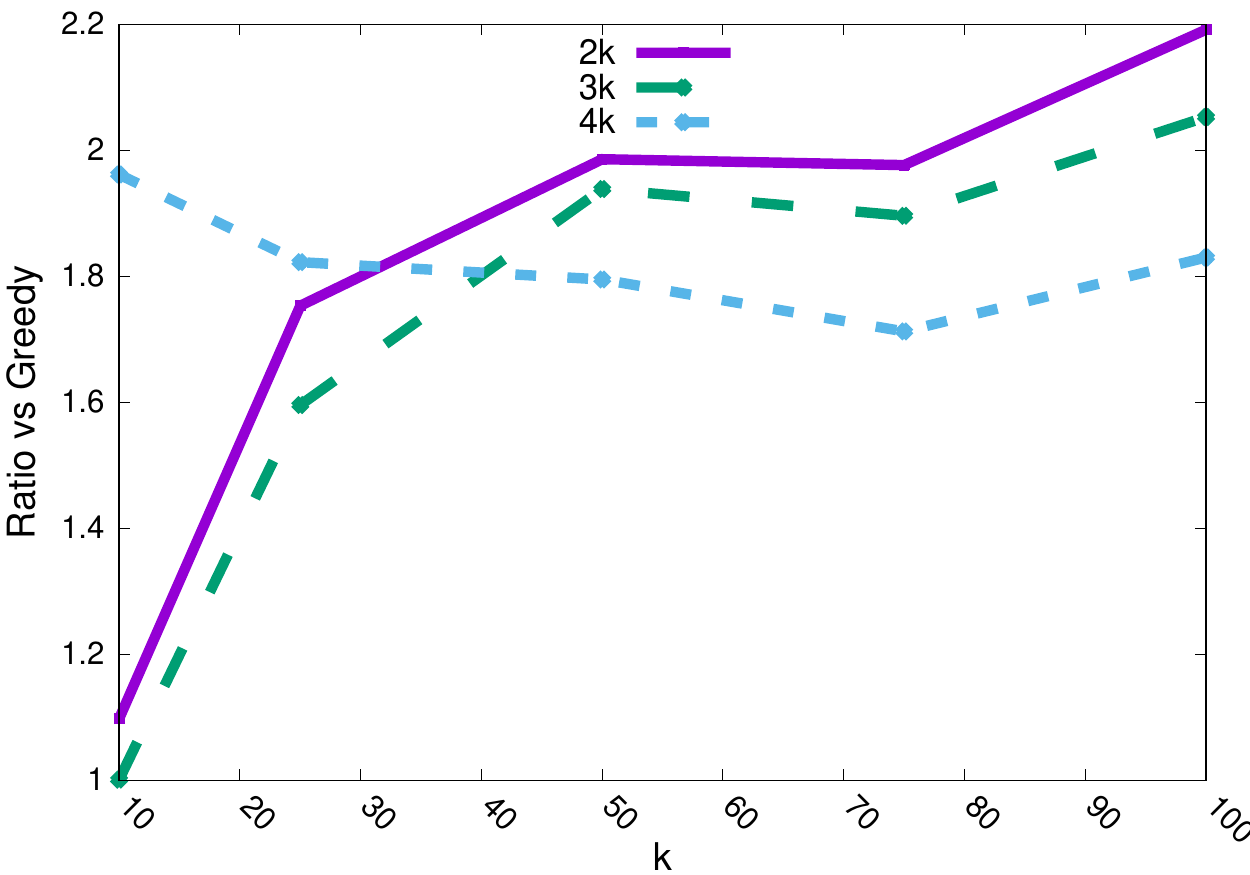}}
\includegraphics[width=0.23\textwidth]{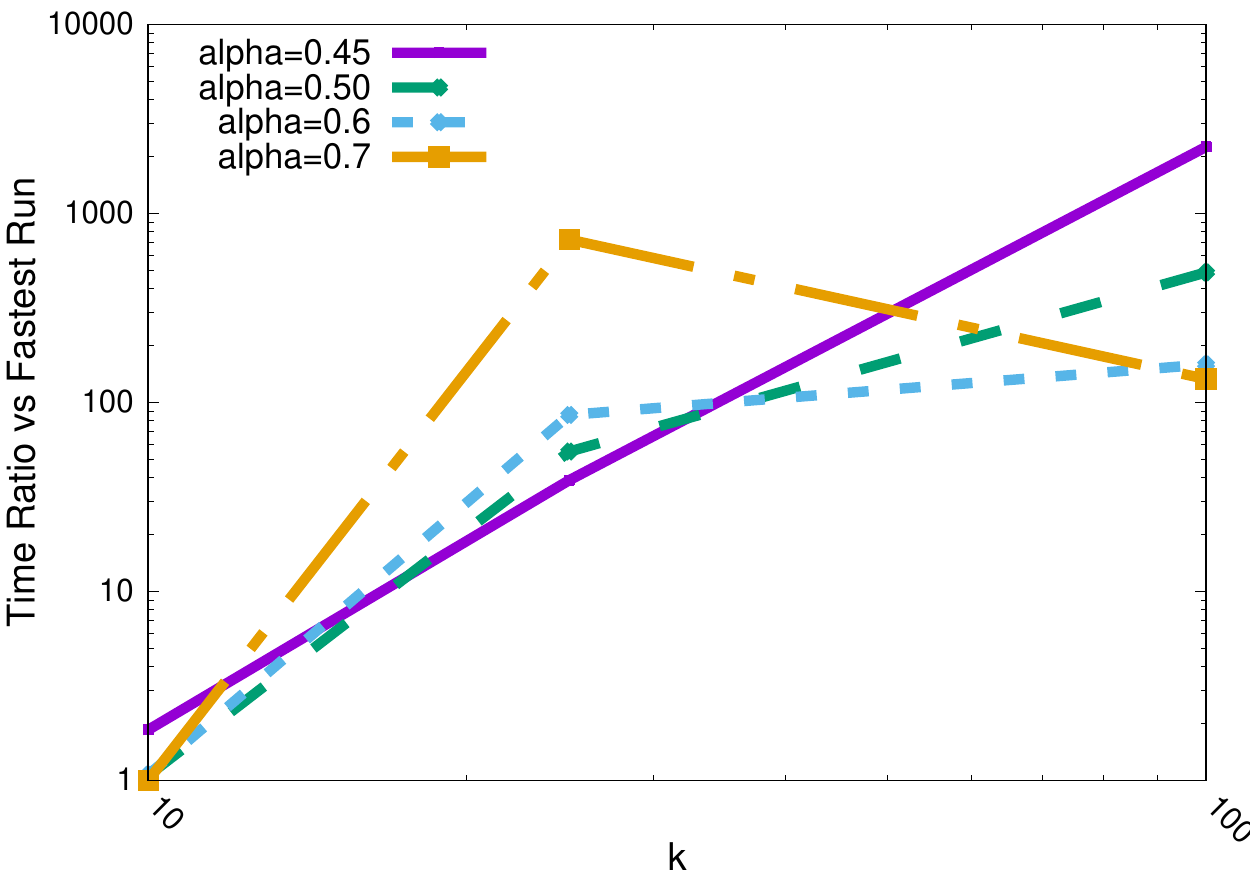}
\end{tabular}
\caption{(a) Cost of the solution vs Greedy baseline for various $k$ and $m$  over the \reuters dataset, using $\alpha=0.05$, $\epsilon = 0.1$. 
(b) Time vs $k$ and $\alpha$ for \fourarea dataset, $\epsilon = 0.5$, $m=2$. We report the ratio of running time of a given instance over the the fastest instance.
\label{fig:merged}
}
\end{figure}

In Figure~\ref{fig:merged}(a) we evaluate the effect of the $m$ factor used in the core-set to reduce the number of $y_i$'s variables to $m \times k$. Notice how generally larger $m$'s are associated with lower cost (ratio), but the algorithm obtains good results even with $m = 2$, allowing to use small LP instances in our algorithm. 

\paragraph{Time}

In Figure~\ref{fig:merged}(b) we show how the running time is affected by $k$ and $\alpha$. As expected, larger $k$'s correspond to increased running times.  Similarly, larger $\alpha$'s mostly correspond to lower running time because it is easier to find a solution with larger $\alpha$ and hence fewer $\lambda \in \Lambda$ need to be evaluated to find a non-empty $\cP(\lambda, \alpha)$.
\newcommand{\tred}{\texttt{red}\xspace}
\newcommand{\tblue}{\texttt{blue}\xspace}

\section{Hardness}
\label{sec:hardness}
In this section, we complement our algorithmic results by proving a factor-$2$ approximation hardness for  minimizing the $k$-center cost of a $\alpha$-capped clustering, of arbitrary number of cluster,  for $\alpha \in (0,0.5]$. This shows the hardness of $\alpha$-capped clustering, with $k$-center objective, even allowing arbitrary many clusters.

As in~\cite{chierichetti2017fair}, we use a reduction from the \tsD\xspace problem defined as follows. Given an undirected $n$-vertex graph $G=(V,E)$, and a positive integer $t$, can $V$ be partitioned into pairwise disjoint subsets $V_1, \ldots, V_{n/t}$ so that $|V_i| = t$ and $G[V_i]$ contains a star of size $t$, i.e., a center and $t-1$ leaves? Two well-known special cases of \tsD are the case $t=2$ (finding a perfect matching) and the case $t=3$ also known as $P3$-decomposition (finding a partition into connected triplets). Since a perfect matching  can be found in polynomial time, \tsD is tractable for $t=2$.  Kirkpatrick and Hell~\cite{kirkpatrick1983complexity} showed that \tsD is NP-hard for $t \geq 3$.  \tsD  remains NP-hard~\cite{dyer1985complexity} even if the graph is planar and bipartite, for any $t\ge 3$. In our proofs we will use that the problem is NP-hard.

Our reduction starts from input $G$ of a \tsD\xspace instance, and defines a set $D$ of points in a metric space with distance function $d(\cdot, \cdot)$ and a color assignment $c(j)$ for each point $j\in D$. More precisely, we construct a graph $G' = (D, E')$ and define the metric space to be the shortest path metric where edges have unit length. Before proceeding to the main hardness result, we explain how graph $G'$ is constructed in polynomial time from the bipartite  graph $G = (V_1\cup V_2,E)$ input of \tsD. In the following we use the word point and vertex interchangeably.

\paragraph{Construction of the graph $G'$} The construction of $G'$ depends on the solution to the following system of linear equations:
\begin{equation}\label{eq:system1}
2 t_{r} + 1 t_{b} = |V_1| \qquad \textrm{and}\qquad 1 t_{r} + 2 t_{b} = |V_2|  
\end{equation}
Since this is a system of two equations in two variables, and the determinant of the system is non-zero, there exists a unique solution $(t_r, t_b)$. If the unique solution has at least a variable that is not a non-negative integer, we construct $G'$ as a trivial instance with no fair coverage (say one \tred node). For the rest of the construction we assume we are in the case that $t_r$, $t_b$ are both non-negative integers.

First we define the construction for the $\alpha=\frac{1}{2}$, case then we show how to extend this to the $\alpha=\frac{1}{2+t}$ case for any integer $t>0$. In the $\alpha=\frac{1}{2}$ case, the construction proceeds as follows. The graph $G'=(V',E')$ has four layers of nodes $L_1,L_2,L_3,L_4$, where each layer $L_i$ consists of two disjoint sets $R_i,B_i$ of respectively of color \texttt{red} and \texttt{blue}. The layer $L_1$ has a 1-to-1 correspondence with nodes in $V$. More precisely, $L_1$ consists of $R_1 \equiv V_1$ and $B_1 \equiv V_2$, corresponding to the two sides of the graphs $G$ and two nodes in $L_1$ are connected in $E'$ iff their equivalent nodes are connected in $E$. Then, $L_2$ consists of $R_2,B_2$ such that $|R_2|=|R_1|, |B_2|=|B_1|$. In $E'$, there is a matching between each node in $R_2$ (resp. $B_2$), and a node in $R_1$ (resp. $B_1$). Now let $u_b = |B_2| - t_r$ and $u_r = |R_2| - t_b$. Notice that from the Equations~(\ref{eq:system1}) $u_b,u_r$ are non-negative integers. Layer $L_3$ has components $B_3, R_3$ of size $|B_3| = u_b$, $R_3 = u_r$ and $E'$ contains a complete bipartite graphs between sides $R_2$,$R_3$ and another complete bipartite graph between sides $B_2,B_3$. Finally layer $L_4$ consists of $R_4,B_4$ such that $|R_4|=2|B_3|$ and $|B_4|=2|R_3|$ and each node in $R_3$ is connected with exactly two nodes in $B_4$ (resp. each node in $B_3$ is connected with exactly two nodes in $R_4$). This completes the construction for the $\alpha=\frac{1}{2}$ case, for the general $\alpha=\frac{1}{2+t}$ we add to each layer $L_2$ and $L_4$, $t$ disjoint sets $C^t_i$ ($i=2,4$) such that all nodes in $C^t_i$ have color $c_t$ (distinct from \tred and \tblue). For each $t$, $|C^t_2|= 2 (t_r+t_b)$ and $C^t_2$ is further subdivided in two disjoint parts $C^t_{2,r}$, $C^t_{2,b}$ such that $|C^t_{2,b}|=2t_b$, $|C^t_{2,r}|=2t_r$, and  $B_1,  C^t_{2,b}$ for a complete bipartite graph (reps.   $R_1,  C^t_{2,r}$ form a complete bipartite graph).
Finally for each $t$, $|C^t_4| = 2 |L_3|$ and each node in $L_3$ is connected with exactly 2 nodes in $C^t_4$.

The following states our main hardness result for $\alpha \in (0, 0.5]$.

\begin{theorem}
It is NP-hard to approximate the $\alpha$-capped clustering with $k$-center objective with $\alpha \in (0,0.5]$ within a factor better than $2$.
\end{theorem}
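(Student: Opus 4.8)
The plan is to give a gap-preserving reduction from \tsD with $t=3$ (i.e.\ $P_3$-decomposition), which is NP-hard. Since $G'$ carries the unit-length shortest-path metric, every pairwise distance is a positive integer, so the optimal $\alpha$-capped cost is an integer and is at least $1$ (a singleton cluster is all one color and violates any cap $\alpha\le\tfrac{1}{2}$). It therefore suffices to show that $G'$ admits a clustering of cost $1$ exactly when $G$ has a $P_3$-decomposition: a cost-$1$ cluster is precisely a set inside the closed neighborhood $N[v]$ of its center $v$, and for $\alpha=\tfrac{1}{2}$ with two colors the cap forces every cluster to contain equally many \tred and \tblue points. Any algorithm with ratio strictly below $2$ would then decide cost $1$ versus cost $\ge 2$ and hence solve $P_3$-decomposition.

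For completeness I would exhibit an explicit cost-$1$ clustering from a decomposition into $t_r$ stars of shape $(2\text{ in }V_1,\,1\text{ in }V_2)$ and $t_b$ stars of shape $(1,2)$ (these counts are forced by Equation~\eqref{eq:system1}). Each $(2,1)$-star, with \tblue center $b\in B_1$ and \tred leaves $r,r'\in R_1$, becomes the radius-$1$ cluster $\{r,r',b,\beta_2\}$ where $\beta_2$ is the $B_2$-match of $b$; symmetrically each $(1,2)$-star becomes $\{r,b,b',\rho_2\}$ using the $R_2$-match of its \tred center. These consume exactly $t_b$ nodes of $R_2$ and $t_r$ of $B_2$, leaving $u_r=|R_2|-t_b=|R_3|$ nodes of $R_2$ and $u_b=|B_2|-t_r=|B_3|$ of $B_2$; the gadget absorbs the rest via $\{\rho_3,\rho_2,b_4,b_4'\}$ for each $\rho_3\in R_3$ (with its two $B_4$-neighbors) and the mirror clusters $\{\beta_3,\beta_2,r_4,r_4'\}$ for each $\beta_3\in B_3$. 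Every cluster is balanced of radius $1$, so the cost is $1$.

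The harder direction is soundness, where the gadget must be pinned down. Assuming a cost-$1$ balanced clustering, the first observation is that no $R_2$ or $B_2$ node can be a center, because its closed neighborhood is monochromatic ($R_2$ sees only its $R_1$-match and all of $R_3$, all \tred, and dually for $B_2$). Second, each $R_4$ (resp.\ $B_4$) node has a single neighbor, one $B_3$ (resp.\ $R_3$) node, and each $\beta_3\in B_3$ owns exactly two such $R_4$-leaves; since both leaves can appear only in a cluster meeting $\beta_3$, and $\beta_3$ lies in a single cluster that can be centered neither at a leaf nor at a $B_2$ node, that cluster is forced to equal $\{\beta_3,r_4,r_4',\beta_2\}$ for one $B_2$ node $\beta_2$. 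This consumes exactly $u_b$ nodes of $B_2$ and, symmetrically, $u_r$ of $R_2$. The $t_r$ surviving $B_2$ nodes and $t_b$ surviving $R_2$ nodes can be covered only by clusters centered at their $L_1$-matches, and including such a match as a second point of its own color forces two opposite-color $L_1$-neighbors, reproducing exactly the structural clusters above, each encoding a $G$-star. Counting, these use $2t_r+t_b=|V_1|$ nodes of $R_1$ and $t_r+2t_b=|V_2|$ of $B_1$, i.e.\ all of $L_1\equiv V$; being disjoint, the stars constitute a $P_3$-decomposition of $G$.

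For $\alpha=\tfrac{1}{2+t}$ I would reuse the same reduction but pad each cluster to size $2(2+t)$, so the cap permits exactly two points per color; the $t$ auxiliary color classes carried by $C^\tau_2$ and $C^\tau_4$ contribute two nodes of each new color to every structural and every gadget cluster, which is precisely why $|C^\tau_2|=2(t_r+t_b)$ (two per structural cluster) and $|C^\tau_4|=2|L_3|$ (two per gadget cluster). The completeness and soundness arguments transfer once one verifies that the auxiliary nodes are forced into the intended clusters by their adjacencies. The main obstacle is this soundness bookkeeping: one must rule out every alternative radius-$1$ balanced cluster, and it is exactly the monochromatic-center and degree-one observations that make the gadget rigid enough to read off a $P_3$-decomposition. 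Together with the NP-hardness of $P_3$-decomposition, the resulting cost-$1$-versus-$2$ gap yields the claimed factor-$2$ inapproximability for all $\alpha\in(0,\tfrac{1}{2}]$.
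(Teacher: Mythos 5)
Your proposal follows the paper's reduction exactly --- the same layered gadget $G'$ built from the system~\eqref{eq:system1}, the same reduction from bipartite $P_3$-decomposition --- and in fact supplies the completeness and soundness arguments (Lemmas~\ref{lem:6.2} and~\ref{lem:6.3}) that the paper itself defers to its extended version, with your rigidity analysis (monochromatic closed neighborhoods of $L_2$ nodes, degree-one $L_4$ leaves, and the counting that the forced star-clusters exhaust $L_1$) being correct for $\alpha=\tfrac12$. The only substantive differences are that you prove soundness from a cost-$1$ solution rather than the paper's stronger Lemma~\ref{lem:6.3} (which extracts a \tsD even from a cost-$2$ solution); your $1$-versus-$2$ gap still suffices for the stated factor-$2$ hardness since all distances are integral, and your general $\alpha=\frac{1}{2+t}$ case remains a sketch, as you yourself flag.
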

The theorem follows from the following two lemmas,  whose proofs are deferred to the extended version of the paper. 

\begin{lemma}
\label{lem:6.2}
Fix $t \ge 0$ integer. Suppose the bipartite graph $G$ admits a \tsD, then $G'$ has a $\frac{1}{2+t}$-capped clustering of $k$-center cost $1$.
\end{lemma}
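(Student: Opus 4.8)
The plan is to give a direct construction. Assuming a \tsD of the bipartite graph $G$ — equivalently, a partition of $V$ into $3$-stars, each a \tred or \tblue center with two opposite‑colored leaves — I would exhibit an explicit partition of the point set of $G'$ into \cohorts, each of which is a radius‑$1$ star in the shortest‑path metric (so its $k$-center cost is exactly $1$) and in which every color occupies at most a $\frac{1}{2+t}$ fraction. The key simplification is that cost $1$ forces every non‑center point to be a $G'$-neighbor of its center, so the entire argument reduces to (i) choosing, for each \cohort, a center adjacent in $G'$ to all of its points, and (ii) controlling the color multiplicities so the cap holds; since a cluster using only \tred and \tblue cannot satisfy a $\frac{1}{2+t}$ cap for $t>0$, each \cohort must in fact be color‑balanced.

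First I would dispose of the ``forced'' part of $G'$: layers $L_3,L_4$ together with a portion of $L_2$. Every node of $L_4$, and every node of $C^t_4$, has degree $1$ with its unique neighbor in $L_3$, so it must be grouped with that $L_3$ parent. Thus each $v\in R_3$ is grouped with its two $B_4$ children and its two $C^t_4$ children of each extra color, yielding one \tred, two \tblue, and two of each extra color; the complete bipartite edges between $R_2$ and $R_3$ then let me add one more \tred node from $R_2$, giving a \cohort with exactly two of every color and hence meeting the cap with equality. Symmetrically each $u\in B_3$ absorbs its $R_4$ and $C^t_4$ children plus one $B_2$ node. Using $u_r=2t_r$ and $u_b=2t_b$ from \eqref{eq:system1}, these \cohorts are radius‑$1$, exhaust $L_3\cup L_4$ and all of $C^t_4$, and consume exactly $2t_r$ nodes of $R_2$ and $2t_b$ of $B_2$, leaving $t_b$ \tred and $t_r$ \tblue nodes of $L_2$ for the next step.

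Next I would cluster what remains — all of $L_1\cong G$, the $t_b+t_r$ leftover $L_2$ pendants, and the color sets $C^t_{2,r},C^t_{2,b}$ — guided by the \tsD. The counting identities coming from \eqref{eq:system1} are exactly what should make this possible: the system forces the number of \tred-centered $3$-stars to be $t_b$ and of \tblue-centered stars to be $t_r$, so the number of \tblue leaves equals $|C^t_{2,b}|=2t_b$ and the number of \tred leaves equals $|C^t_{2,r}|=2t_r$, while the leftover $L_2$ pendants can be attached to the star centers along the $L_1$–$L_2$ matching. The intended lift takes each $3$-star together with its matched pendant and a suitable number of color nodes from $C^t_{2,r}$ or $C^t_{2,b}$ (which are complete bipartite to $R_1$, resp.\ $B_1$) to form a radius‑$1$ \cohort with two of each color. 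I would then verify radius $1$ by checking adjacency to the chosen center case by case, and the cap from the equal color multiplicities.

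The main obstacle is precisely this last step, and it is more delicate than naive per‑star lifting suggests: in a $3$-star the center is adjacent to both of its leaves but, for balance, can serve as the partner of only \emph{one} of them, so the second leaf — whose only guaranteed opposite‑colored neighbor is that same center — must be matched using the auxiliary \tred/\tblue nodes and color gadgets, and one must check that a single center is simultaneously adjacent to the \tred pendant \emph{and} to the color nodes it needs (note $R_1$ centers see only $C^t_{2,r}$, while the $C^t_{2,b}$ nodes hang off the \tblue leaves). The clean way to handle this is probably not to fix one \cohort per star but to encode the desired ``two of each color, radius‑$1$'' assignment as a feasible flow/matching on $G'$ and to use the decomposition (through the exact alignment $|C^t_{2,r}|=2t_r$, $|C^t_{2,b}|=2t_b$ and the leftover pendant counts $t_b,t_r$) to certify feasibility; once such an assignment is shown to exist, the cost‑$1$ bound follows from adjacency and the $\frac{1}{2+t}$-cap from the color balance.
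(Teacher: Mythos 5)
The paper defers this proof to its extended version, so there is nothing in-paper to compare against; judged on its own, your proposal is correct on the outer layers but has a genuine gap at the core step, and the difficulty you hit there is real. Your treatment of $L_3\cup L_4\cup C^t_4$ is right: each node of $R_3$ (resp.\ $B_3$) with its two $L_4$ children, its two $C^t_4$ children per extra color, and one $L_2$ node of its own color is a radius-$1$ cluster with exactly two nodes of every color, and your identities $u_r=2t_r$, $u_b=2t_b$ correctly leave $t_b$ nodes of $R_2$ and $t_r$ nodes of $B_2$ (one omission: since $R_2$--$R_3$ is complete bipartite you must choose the consumed $R_2$ nodes to be precisely those \emph{not} matched to the $t_b$ \tred star centers, and symmetrically in $B_2$). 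The gap is that you never exhibit the clusters covering $L_1$, $C^t_2$, and the leftover pendants; you defer to an unspecified ``flow/matching feasibility'' argument, which is not a proof. And no flow can fix what you ran into: a \tred star center's $R_2$-match has that center as its only neighbor, so it forces a cluster centered at that $R_1$ node with two \tred points, hence two points of each extra color adjacent to $R_1$, i.e., drawn from $C^t_{2,r}$; there are $t_b$ such stars needing $2t_b$ gadget nodes per color, but $|C^t_{2,r}|=2t_r$. Taking the construction literally, for $t\ge 1$ and $t_r\ne t_b$ no cost-$1$ clustering exists at all: with $V_1=\{a\}$, $V_2=\{b_1,b_2\}$ and both edges present ($t_r=0$, $t_b=1$, $\alpha=1/3$), the forced $B_3$-centered clusters consume all of $B_2$, and the six remaining points ($a$, its $R_2$-match, $b_1$, $b_2$, and the two $C^1_{2,b}$ nodes) admit no feasible radius-$1$ clustering, since $a$'s match forces a cluster centered at $a$ containing two color-$1$ points while $C^1_{2,b}$ is adjacent only to $B_1$.

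The way out is to recognize this as an inconsistency in the printed construction rather than a routing problem to be certified by a flow: the gadget adjacent to $R_1$ must have $2t_b$ nodes (two per \tred-centered star) and the one adjacent to $B_1$ must have $2t_r$ nodes, i.e., the cardinalities $2t_r$ and $2t_b$ assigned to $C^t_{2,r}$ and $C^t_{2,b}$ should be exchanged (equivalently, their complete-bipartite attachments swapped). With that correction the proof closes by the direct per-star construction --- the star center, its $L_2$-match, both leaves, and two gadget nodes of each extra color, all adjacent to the center --- which has radius $1$, contains exactly two points of every color, and together with your $L_3/L_4$ clusters exhausts $D$; no flow argument is needed. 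Your ``main obstacle'' paragraph also misdescribes the combinatorics: there is no tension in a center ``serving as the partner of only one leaf,'' since both leaves live in the same cluster of size $4+2t$; the only things to verify are adjacency to the chosen center and the gadget cardinalities.
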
 

\begin{lemma}
\label{lem:6.3}
Fix $t \ge 0$ integer. If there exists a solution of $k$-center cost at most $2$ to $\frac{1}{2+t}$-capped  clustering of  $G'$, then the bipartite graph $G$ admits a \tsD.
\end{lemma}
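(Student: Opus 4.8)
The plan is to run the construction in reverse: starting from any clustering of $G'$ of cost at most $2$ that respects the $\frac{1}{2+t}$-cap, I would recover a \tsD of the bipartite graph $G$. Because the metric is the unit shortest-path metric, all distances, and hence the clustering cost, are integers, so every cluster is contained in a ball of radius $2$ about its center. First I would record the coarse geometry of $G'$: the nodes of $L_4$ and of the $C^t_4$ sets are leaves hanging off $L_3$, so the only vertices within distance $2$ of such a leaf lie in $L_3$, in $L_2\cup C^t_2$, or among the sibling leaves sharing the same $L_3$ parent; in particular no $L_4$ or $C^t_4$ node is within distance $2$ of $L_1$, since the shortest $L_4$-to-$L_1$ path has length $3$. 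This already decouples the ``outer'' region $L_2\cup L_3\cup L_4\cup C^t$ from the ``inner'' region built on $L_1$, except possibly through clusters whose center sits in a middle layer.

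Next I would pin down how the outer region is clustered. The crucial design feature is that every node of $L_3$ carries exactly two leaves of the opposite color in $L_4$ (and two leaves in each $C^t_4$), with $|R_4|=2|B_3|$, $|B_4|=2|R_3|$, and $|C^t_4|=2|L_3|$. Covering a pair of opposite-color leaves while obeying the cap forces the cluster containing them to import same-color padding, and the complete bipartite connections $R_2\!-\!R_3$ and $B_2\!-\!B_3$ make $L_2$ the only source of such padding reachable at cost $2$. Mirroring the balanced-cluster bookkeeping that underlies Lemma~\ref{lem:6.2}, I would show that covering all outer leaves consumes exactly $u_r=|R_2|-t_b$ red nodes of $R_2$ and $u_b=|B_2|-t_r$ blue nodes of $B_2$ (together with the matching $C^t_2$ nodes), leaving a residual supply of exactly $t_b$ red $R_2$-nodes and $t_r$ blue $B_2$-nodes (and the prescribed $C^t_{2,r},C^t_{2,b}$ counts) available to the inner clusters.

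I would then read off the decomposition from the inner clustering. After the outer region is accounted for, the still-unclustered inner nodes are all of $L_1\equiv V$ together with the residual $L_2$ and $C^t_2$ padding, and these must be partitioned into capped clusters of cost at most $2$. Since $L_1$ inherits exactly the edges of $G$, each inner cluster restricted to $L_1$ lies in a radius-$2$ ball of $G$; the cap, combined with the residual padding counts dictated by the system $2t_r+t_b=|V_1|$, $t_r+2t_b=|V_2|$, forces each inner cluster to consist of one $L_1$ center, its $G$-neighbours serving as leaves, and just enough $L_2/C^t_2$ padding to balance the colors. Mapping each such cluster to the star formed by its center and the $L_1$-leaves it contains yields a partition of $V$ into stars of $G$ of the prescribed size, i.e., a \tsD.

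The main obstacle is the slack created by allowing cost $2$ rather than $1$. Unlike the cost-$1$ case, a radius-$2$ ball centred at a middle-layer node can reach both regions at once (for instance an $R_2$-centred cluster reaches $B_1$ through its matched $R_1$ and reaches $B_4$ through $R_3$), and an $L_1$-centred cluster can grab $L_1$ vertices at graph-distance $2$ rather than only true $G$-neighbours. I would neutralise both phenomena with a single global conservation argument carried out color by color: the gadget sizes are chosen so that the total amount of padding of each color is \emph{exactly} the demand of the intended star decomposition, with no surplus. Consequently, any cluster that bridges the two regions, or that attaches a non-neighbour of its center, must either overspend padding of some color---forcing a cap violation in another cluster---or leave some $L_4$/$C^t_4$ leaf uncoverable within distance $2$. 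Formalising this as a charging scheme that assigns every padding node to a unique star and proving the assignment tight is the delicate heart of the argument; once it is in place, the clustering is forced (up to relabelling) into the canonical layered form and the \tsD is obtained as above. The general $\alpha=\frac{1}{2+t}$ case differs from the $\alpha=\tfrac12$ case only in bookkeeping the $t$ extra color classes carried by the $C^t$ gadgets.
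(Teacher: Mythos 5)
The paper defers its proof of this lemma to the extended version, so there is no in-paper argument to compare against; I can only judge your proposal on its own terms. As written, it is a plan rather than a proof, and the decisive step is missing. You correctly identify the crux --- the slack created by allowing radius $2$ instead of $1$ --- but you then assert, rather than establish, that a ``global conservation argument carried out color by color'' forces every feasible clustering into the canonical layered form; your own text concedes that formalising this charging scheme is ``the delicate heart of the argument.'' That heart is exactly the content of Lemma~\ref{lem:6.3}, so what you have is a restatement of the difficulty, not a resolution of it.

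Two concrete points show the gap is real. First, the claimed decoupling of the inner and outer regions is much weaker than you state: since each $v\in R_1$ is at distance $1$ from its match in $R_2$ and $R_2$--$R_3$ is a complete bipartite graph, \emph{every} node of $R_3$ lies within distance $2$ of every node of $R_1$; likewise an $R_2$-centered cluster can simultaneously contain its matched $L_1$ vertex and $B_4$ leaves. So cross-region clusters arise even with centers in $L_1$ or $L_3$, not only ``in a middle layer,'' and also your inclusion of $C^t_2$ in the radius-$2$ ball of an $L_4$ leaf is incorrect (those sets are at distance $4$). Second, and more seriously, the step ``map each inner cluster to the star formed by its center and the $L_1$-leaves it contains'' does not survive the radius-$2$ slack: a balanced cluster such as $\{v,x,u_1,u_2\}$ with $v,x\in R_1$, $u_1,u_2\in B_1$, centered at $u_1$, is feasible, uses no $L_2$ padding, and contains four $L_1$ vertices rather than three; since a \tsD requires parts of size exactly $3$ inducing a star, one cannot read off a star per cluster --- some global regrouping of $L_1$ vertices across clusters (together with an argument that vertices grouped together are genuinely adjacent in $G$, not merely at $G'$-distance $\le 4$) is required. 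Until the tight counting argument is actually carried out --- showing that the exact padding budgets $t_b$, $t_r$, $u_r=2t_r$, $u_b=2t_b$ leave no room for any of these deviations --- the lemma is not proved.
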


\section{Conclusions}
\label{sec:conclusion}

Clustering with color constraints is an algorithmic take on ensuring balance and fairness in applications.  In this paper we addressed \capped clustering, which is the problem of finding the best clustering where no cluster has an over-represented color.   We obtained provably good algorithms for this problem; our experiments show that the algorithms are effective on different real-world datasets.  While our general algorithm is based on solving an LP, it can be challenging for large number of points.  It is an interesting question to develop a combinatorial algorithm for the general case that can scale to large datasets.  It is also interesting to improve the bounds guaranteed by our algorithms and extend them to other clustering objectives such as $k$-means and $k$-median.  

\balance
\bibliographystyle{ACM-Reference-Format}
\bibliography{paper}


\begin{thebibliography}{00}


\ifx \showCODEN    \undefined \def \showCODEN     #1{\unskip}     \fi
\ifx \showDOI      \undefined \def \showDOI       #1{{\tt DOI:}\penalty0{#1}\ }
  \fi
\ifx \showISBNx    \undefined \def \showISBNx     #1{\unskip}     \fi
\ifx \showISBNxiii \undefined \def \showISBNxiii  #1{\unskip}     \fi
\ifx \showISSN     \undefined \def \showISSN      #1{\unskip}     \fi
\ifx \showLCCN     \undefined \def \showLCCN      #1{\unskip}     \fi
\ifx \shownote     \undefined \def \shownote      #1{#1}          \fi
\ifx \showarticletitle \undefined \def \showarticletitle #1{#1}   \fi
\ifx \showURL      \undefined \def \showURL       #1{#1}          \fi
\providecommand\bibfield[2]{#2}
\providecommand\bibinfo[2]{#2}
\providecommand\natexlab[1]{#1}
\providecommand\showeprint[2][]{arXiv:#2}

\bibitem[\protect\citeauthoryear{Backurs, Indyk, Onak, Schieber, Vakilian, and
  Wagner}{Backurs et~al\mbox{.}}{2019}]%
        {backursfair}
\bibfield{author}{\bibinfo{person}{Arturs Backurs}, \bibinfo{person}{Piotr
  Indyk}, \bibinfo{person}{Krzysztof Onak}, \bibinfo{person}{Baruch Schieber},
  \bibinfo{person}{Ali Vakilian}, {and} \bibinfo{person}{Tal Wagner}.}
  \bibinfo{year}{2019}\natexlab{}.
\newblock \showarticletitle{Scalable Fair Clustering}. In
  \bibinfo{booktitle}{{\em ICML}}.
\newblock


\bibitem[\protect\citeauthoryear{Basu, Davidson, and Wagstaff}{Basu
  et~al\mbox{.}}{2008}]%
        {constrainedclustering}
\bibfield{author}{\bibinfo{person}{Sugato Basu}, \bibinfo{person}{Ian
  Davidson}, {and} \bibinfo{person}{Kiri Wagstaff}.}
  \bibinfo{year}{2008}\natexlab{}.
\newblock \bibinfo{booktitle}{{\em Constrained Clustering: Algorithms,
  Applications and Theory}}.
\newblock \bibinfo{publisher}{CRC Press}.
\newblock


\bibitem[\protect\citeauthoryear{Bera, Chakrabarty, and Negahbani}{Bera
  et~al\mbox{.}}{2019}]%
        {bera2019fair}
\bibfield{author}{\bibinfo{person}{Suman~K Bera}, \bibinfo{person}{Deeparnab
  Chakrabarty}, {and} \bibinfo{person}{Maryam Negahbani}.}
  \bibinfo{year}{2019}\natexlab{}.
\newblock \bibinfo{booktitle}{{\em Fair Algorithms for Clustering}}.
\newblock \bibinfo{type}{{T}echnical {R}eport} 1901.02393.
  \bibinfo{institution}{arXiv}.
\newblock


\bibitem[\protect\citeauthoryear{Bercea, Gross, Khuller, Kumar, Rösner,
  Schmidt, and Schmidt}{Bercea et~al\mbox{.}}{2018}]%
        {bercea2018fair}
\bibfield{author}{\bibinfo{person}{Ioana~O. Bercea}, \bibinfo{person}{Martin
  Gross}, \bibinfo{person}{Samir Khuller}, \bibinfo{person}{Aounon Kumar},
  \bibinfo{person}{Clemens Rösner}, \bibinfo{person}{Daniel~R. Schmidt}, {and}
  \bibinfo{person}{Melanie Schmidt}.} \bibinfo{year}{2018}\natexlab{}.
\newblock \bibinfo{booktitle}{{\em On the cost of essentially fair
  clusterings}}.
\newblock \bibinfo{type}{{T}echnical {R}eport} 1811.10319.
  \bibinfo{institution}{arXiv}.
\newblock


\bibitem[\protect\citeauthoryear{Calders and Verwer}{Calders and
  Verwer}{2010}]%
        {calders2010three}
\bibfield{author}{\bibinfo{person}{Toon Calders} {and} \bibinfo{person}{Sicco
  Verwer}.} \bibinfo{year}{2010}\natexlab{}.
\newblock \showarticletitle{Three naive Bayes approaches for
  discrimination-free classification}.
\newblock \bibinfo{journal}{{\em DMKD\/}} \bibinfo{volume}{21},
  \bibinfo{number}{2} (\bibinfo{year}{2010}), \bibinfo{pages}{277--292}.
\newblock


\bibitem[\protect\citeauthoryear{Celis, Huang, and Vishnoi}{Celis
  et~al\mbox{.}}{2018a}]%
        {celis2018multiwinner}
\bibfield{author}{\bibinfo{person}{L~Elisa Celis}, \bibinfo{person}{Lingxiao
  Huang}, {and} \bibinfo{person}{Nisheeth~K Vishnoi}.}
  \bibinfo{year}{2018}\natexlab{a}.
\newblock \showarticletitle{Multiwinner Voting with Fairness Constraints.}. In
  \bibinfo{booktitle}{{\em IJCAI}}. \bibinfo{pages}{144--151}.
\newblock


\bibitem[\protect\citeauthoryear{Celis, Straszak, and Vishnoi}{Celis
  et~al\mbox{.}}{2018b}]%
        {celis2017ranking}
\bibfield{author}{\bibinfo{person}{L.~Elisa Celis}, \bibinfo{person}{Damian
  Straszak}, {and} \bibinfo{person}{Nisheeth~K. Vishnoi}.}
  \bibinfo{year}{2018}\natexlab{b}.
\newblock \showarticletitle{Ranking with Fairness Constraints}. In
  \bibinfo{booktitle}{{\em ICALP}}. \bibinfo{pages}{28:1--28:15}.
\newblock


\bibitem[\protect\citeauthoryear{Cha, Gwon, and Kung}{Cha
  et~al\mbox{.}}{2017}]%
        {cha2017language}
\bibfield{author}{\bibinfo{person}{Miriam Cha}, \bibinfo{person}{Youngjune
  Gwon}, {and} \bibinfo{person}{HT Kung}.} \bibinfo{year}{2017}\natexlab{}.
\newblock \showarticletitle{Language modeling by clustering with word
  embeddings for text readability assessment}. In \bibinfo{booktitle}{{\em
  CIKM}}. \bibinfo{pages}{2003--2006}.
\newblock


\bibitem[\protect\citeauthoryear{Chierichetti, Kumar, Lattanzi, and
  Vassilvitskii}{Chierichetti et~al\mbox{.}}{2017}]%
        {chierichetti2017fair}
\bibfield{author}{\bibinfo{person}{Flavio Chierichetti}, \bibinfo{person}{Ravi
  Kumar}, \bibinfo{person}{Silvio Lattanzi}, {and} \bibinfo{person}{Sergei
  Vassilvitskii}.} \bibinfo{year}{2017}\natexlab{}.
\newblock \showarticletitle{Fair clustering through fairlets}. In
  \bibinfo{booktitle}{{\em NIPS}}. \bibinfo{pages}{5029--5037}.
\newblock


\bibitem[\protect\citeauthoryear{Dwork, Hardt, Pitassi, Reingold, and
  Zemel}{Dwork et~al\mbox{.}}{2012}]%
        {dwork2012fairness}
\bibfield{author}{\bibinfo{person}{Cynthia Dwork}, \bibinfo{person}{Moritz
  Hardt}, \bibinfo{person}{Toniann Pitassi}, \bibinfo{person}{Omer Reingold},
  {and} \bibinfo{person}{Richard Zemel}.} \bibinfo{year}{2012}\natexlab{}.
\newblock \showarticletitle{Fairness through awareness}. In
  \bibinfo{booktitle}{{\em ITCS}}. \bibinfo{pages}{214--226}.
\newblock


\bibitem[\protect\citeauthoryear{Dyer and Frieze}{Dyer and Frieze}{1985}]%
        {dyer1985complexity}
\bibfield{author}{\bibinfo{person}{Martin~E Dyer} {and} \bibinfo{person}{Alan~M
  Frieze}.} \bibinfo{year}{1985}\natexlab{}.
\newblock \showarticletitle{On the complexity of partitioning graphs into
  connected subgraphs}.
\newblock \bibinfo{journal}{{\em Discrete Applied Mathematics\/}}
  \bibinfo{volume}{10}, \bibinfo{number}{2} (\bibinfo{year}{1985}),
  \bibinfo{pages}{139--153}.
\newblock


\bibitem[\protect\citeauthoryear{Epasto, Mahdian, Mirrokni, and Zuo}{Epasto
  et~al\mbox{.}}{2018}]%
        {WWW18}
\bibfield{author}{\bibinfo{person}{Alessandro Epasto},
  \bibinfo{person}{Mohammad Mahdian}, \bibinfo{person}{Vahab~S. Mirrokni},
  {and} \bibinfo{person}{Song Zuo}.} \bibinfo{year}{2018}\natexlab{}.
\newblock \showarticletitle{Incentive-Aware Learning for Large Markets}. In
  \bibinfo{booktitle}{{\em WWW}}. \bibinfo{pages}{1369--1378}.
\newblock


\bibitem[\protect\citeauthoryear{Feldman, Friedler, Moeller, Scheidegger, and
  Venkatasubramanian}{Feldman et~al\mbox{.}}{2015}]%
        {feldman2015certifying}
\bibfield{author}{\bibinfo{person}{Michael Feldman}, \bibinfo{person}{Sorelle~A
  Friedler}, \bibinfo{person}{John Moeller}, \bibinfo{person}{Carlos
  Scheidegger}, {and} \bibinfo{person}{Suresh Venkatasubramanian}.}
  \bibinfo{year}{2015}\natexlab{}.
\newblock \showarticletitle{Certifying and removing disparate impact}. In
  \bibinfo{booktitle}{{\em KDD}}. \bibinfo{pages}{259--268}.
\newblock


\bibitem[\protect\citeauthoryear{Fish, Kun, and Lelkes}{Fish
  et~al\mbox{.}}{2016}]%
        {fish2016confidence}
\bibfield{author}{\bibinfo{person}{Benjamin Fish}, \bibinfo{person}{Jeremy
  Kun}, {and} \bibinfo{person}{{\'A}d{\'a}m~D Lelkes}.}
  \bibinfo{year}{2016}\natexlab{}.
\newblock \showarticletitle{A confidence-based approach for balancing fairness
  and accuracy}. In \bibinfo{booktitle}{{\em SDM}}. \bibinfo{pages}{144--152}.
\newblock


\bibitem[\protect\citeauthoryear{Gonzalez}{Gonzalez}{1985}]%
        {gonzalez1985clustering}
\bibfield{author}{\bibinfo{person}{Teofilo~F Gonzalez}.}
  \bibinfo{year}{1985}\natexlab{}.
\newblock \showarticletitle{Clustering to minimize the maximum intercluster
  distance}.
\newblock \bibinfo{journal}{{\em TCS\/}}  \bibinfo{volume}{38}
  (\bibinfo{year}{1985}), \bibinfo{pages}{293--306}.
\newblock


\bibitem[\protect\citeauthoryear{Gungor}{Gungor}{2018}]%
        {gungor2018fifty}
\bibfield{author}{\bibinfo{person}{Abdulmecit Gungor}.}
  \bibinfo{year}{2018}\natexlab{}.
\newblock \bibinfo{title}{Fifty Victorian Era Novelists Authorship Attribution
  Data}.
\newblock   (\bibinfo{year}{2018}).
\newblock


\bibitem[\protect\citeauthoryear{Hochbaum and Shmoys}{Hochbaum and
  Shmoys}{1985}]%
        {hochbaum1985best}
\bibfield{author}{\bibinfo{person}{Dorit~S Hochbaum} {and}
  \bibinfo{person}{David~B Shmoys}.} \bibinfo{year}{1985}\natexlab{}.
\newblock \showarticletitle{A best possible heuristic for the $k$-center
  problem}.
\newblock \bibinfo{journal}{{\em Mathematics of operations research\/}}
  \bibinfo{volume}{10}, \bibinfo{number}{2} (\bibinfo{year}{1985}),
  \bibinfo{pages}{180--184}.
\newblock


\bibitem[\protect\citeauthoryear{Hsu and Nemhauser}{Hsu and Nemhauser}{1979}]%
        {hsu1979easy}
\bibfield{author}{\bibinfo{person}{Wen-Lian Hsu} {and}
  \bibinfo{person}{George~L Nemhauser}.} \bibinfo{year}{1979}\natexlab{}.
\newblock \showarticletitle{Easy and hard bottleneck location problems}.
\newblock \bibinfo{journal}{{\em Discrete Applied Mathematics\/}}
  \bibinfo{volume}{1}, \bibinfo{number}{3} (\bibinfo{year}{1979}),
  \bibinfo{pages}{209--215}.
\newblock


\bibitem[\protect\citeauthoryear{Jain}{Jain}{2010}]%
        {jain2010data}
\bibfield{author}{\bibinfo{person}{Anil~K Jain}.}
  \bibinfo{year}{2010}\natexlab{}.
\newblock \showarticletitle{Data clustering: 50 years beyond $K$-means}.
\newblock \bibinfo{journal}{{\em Pattern Recognition Letters\/}}
  \bibinfo{volume}{31}, \bibinfo{number}{8} (\bibinfo{year}{2010}),
  \bibinfo{pages}{651--666}.
\newblock


\bibitem[\protect\citeauthoryear{Joseph, Kearns, Morgenstern, and Roth}{Joseph
  et~al\mbox{.}}{2016}]%
        {joseph2016fairness}
\bibfield{author}{\bibinfo{person}{Matthew Joseph}, \bibinfo{person}{Michael
  Kearns}, \bibinfo{person}{Jamie~H Morgenstern}, {and} \bibinfo{person}{Aaron
  Roth}.} \bibinfo{year}{2016}\natexlab{}.
\newblock \showarticletitle{Fairness in learning: Classic and contextual
  bandits}. In \bibinfo{booktitle}{{\em NIPS}}. \bibinfo{pages}{325--333}.
\newblock


\bibitem[\protect\citeauthoryear{Kamishima, Akaho, Asoh, and Sakuma}{Kamishima
  et~al\mbox{.}}{2012}]%
        {kamishima2012fairness}
\bibfield{author}{\bibinfo{person}{Toshihiro Kamishima},
  \bibinfo{person}{Shotaro Akaho}, \bibinfo{person}{Hideki Asoh}, {and}
  \bibinfo{person}{Jun Sakuma}.} \bibinfo{year}{2012}\natexlab{}.
\newblock \showarticletitle{Fairness-aware classifier with prejudice remover
  regularizer}. In \bibinfo{booktitle}{{\em PKDD}}. \bibinfo{pages}{35--50}.
\newblock


\bibitem[\protect\citeauthoryear{Kamishima, Akaho, and Sakuma}{Kamishima
  et~al\mbox{.}}{2011}]%
        {kamishima2011fairness}
\bibfield{author}{\bibinfo{person}{Toshihiro Kamishima},
  \bibinfo{person}{Shotaro Akaho}, {and} \bibinfo{person}{Jun Sakuma}.}
  \bibinfo{year}{2011}\natexlab{}.
\newblock \showarticletitle{Fairness-aware learning through regularization
  approach}. In \bibinfo{booktitle}{{\em ICDMW}}. \bibinfo{pages}{643--650}.
\newblock


\bibitem[\protect\citeauthoryear{Kirkpatrick and Hell}{Kirkpatrick and
  Hell}{1983}]%
        {kirkpatrick1983complexity}
\bibfield{author}{\bibinfo{person}{David~G. Kirkpatrick} {and}
  \bibinfo{person}{Pavol Hell}.} \bibinfo{year}{1983}\natexlab{}.
\newblock \showarticletitle{On the complexity of general graph factor
  problems}.
\newblock \bibinfo{journal}{{\it SIAM J. Comput.}} \bibinfo{volume}{12},
  \bibinfo{number}{3} (\bibinfo{year}{1983}), \bibinfo{pages}{601--609}.
\newblock


\bibitem[\protect\citeauthoryear{Li, Yi, and Zhang}{Li et~al\mbox{.}}{2010}]%
        {liyizhang}
\bibfield{author}{\bibinfo{person}{Jian Li}, \bibinfo{person}{Ke Yi}, {and}
  \bibinfo{person}{Qin Zhang}.} \bibinfo{year}{2010}\natexlab{}.
\newblock \showarticletitle{Clustering with Diversity}. In
  \bibinfo{booktitle}{{\em ICALP}}. \bibinfo{pages}{188--200}.
\newblock


\bibitem[\protect\citeauthoryear{Perozzi, Al-Rfou, and Skiena}{Perozzi
  et~al\mbox{.}}{2014}]%
        {deepwalk}
\bibfield{author}{\bibinfo{person}{B. Perozzi}, \bibinfo{person}{R. Al-Rfou},
  {and} \bibinfo{person}{S. Skiena}.} \bibinfo{year}{2014}\natexlab{}.
\newblock \showarticletitle{DeepWalk: Online Learning of Social
  Representations}. In \bibinfo{booktitle}{{\em KDD}}.
  \bibinfo{pages}{701--710}.
\newblock


\bibitem[\protect\citeauthoryear{R{\"o}sner and Schmidt}{R{\"o}sner and
  Schmidt}{2018}]%
        {rosner2018privacy}
\bibfield{author}{\bibinfo{person}{Clemens R{\"o}sner} {and}
  \bibinfo{person}{Melanie Schmidt}.} \bibinfo{year}{2018}\natexlab{}.
\newblock \showarticletitle{Privacy preserving clustering with constraints}. In
  \bibinfo{booktitle}{{\em ICALP}}. \bibinfo{pages}{96:1--96:14}.
\newblock


\bibitem[\protect\citeauthoryear{Yang and Stoyanovich}{Yang and
  Stoyanovich}{2017}]%
        {yang2017measuring}
\bibfield{author}{\bibinfo{person}{Ke Yang} {and} \bibinfo{person}{Julia
  Stoyanovich}.} \bibinfo{year}{2017}\natexlab{}.
\newblock \showarticletitle{Measuring fairness in ranked outputs}. In
  \bibinfo{booktitle}{{\em SSDBM}}. \bibinfo{pages}{22}.
\newblock


\end{thebibliography}

\end{document}